\def\qed{\hfill$\Box$}
\newtheorem{corollary}{Corollary~}
\newtheorem{definition}{Definition~}
\newtheorem{theorem}{Theorem~}
\newtheorem{proposition}{Proposition~}
\newtheorem{observation}{Observation~}
\def\abstract{{\begin{center}
\Large {\bf Abstract}
\end{center} }}
\bmdefine\bchi{\bm\chi}
\title{\bf Strong $(r,p)$ Cover for Hypergraphs}
\author{Tapas Kumar Mishra \qquad Sudebkumar Prasant Pal \\ Dept. of Computer Science and Engineering\\IIT Kharagpur 721302 \\ India}
\begin{document}
\maketitle

\begin{abstract}
We introduce the notion of the { \it strong $(r,p)$ cover} number $\chi^c(G,k,r,p)$
for $k$-uniform hypergraphs $G(V,E)$, where
$\chi^c(G,k,r,p)$ denotes the minimum number of $r$-colorings of vertices in $V$
such that each hyperedge in $E$ contains at least $min(p,k)$ vertices of distinct
colors in at least one of the  
$\chi^c(G,k,r,p)$ 
$r$-colorings.
We derive the exact values of $\chi^c(K_n^k,k,r,p)$ for small 
values of $n$, $k$, $r$ and $p$, where $K_n^k$ denotes 
the complete $k$-uniform hypergraph of $n$ vertices.
We study the variation of $\chi^c(G,k,r,p)$ with respect to 
changes in $k$, $r$, $p$ and $n$; we show 
that $\chi^c(G,k,r,p)$ is at least 
(i) $\chi^c(G,k,r-1,p-1)$, and, 
(ii) $\chi^c(G',k-1,r,p-1)$, 
where $G'$ is any $(n-1)$-vertex induced sub-hypergraph of $G$.
We 
establish a general
upper bound for $\chi^c(K_n^k,k,r,p)$ 
for complete $k$-uniform hypergraphs using a divide-and-conquer 
strategy for arbitrary values of $k$, $r$ and $p$.
We also 
relate $\chi^c(G,k,r,p)$ to the number $|E|$ of hyperedges, and the 
maximum {\it hyperedge degree (dependency)} $d(G)$, as follows. 
We show that 
$\chi^c(G,k,r,p)\leq x$ for integer $x>0$, if 
$|E|\leq \frac{1}{2}({\frac{r^k}{(t-1)^k \binom{r}{t-1}}})^x $, for any $k$-uniform hypergraph. 
We prove that
a { \it strong $(r,p)$ cover} of size $x$ can be computed in
randomized polynomial time
if 
$d(G)\leq \frac{1}{e}({\frac{r^k}{(p-1)^k \binom{r}{p-1}}})^x-1$. 
\end{abstract}

\bigskip

\noindent {\bf Keywords:} Hypergraph bicoloring, Covering, Coloring, Strong coloring

\section{Introduction}
\label{sec:intro}

Let $G(V,E)$ be an $n$-vertex $k$-uniform hypergraph. 
A hyperedge $e \in E$ is said to be {\it properly $(r,p)$  colored} 
by a $r$-coloring of vertices in $V$
if at least $\min(p,|e|)$ vertices in the hyperedge $e$ are 
distinctly colored by the $r$-coloring.
A {\it proper $(r,p)$  coloring} of $G$ is an 
$r$-coloring of the vertices in $V$, 
such that each hyperedge $e \in E$ is properly $(r,p)$ colored.
For some values of $r$ and $p$, $G$ may not have any 
proper $(r,p)$ coloring. 
The decision problem of 
determining whether $G$ has a proper $(r,p)$ coloring is NP-complete because
(i) the decision problem of bicolorability of hypergraphs is NP-complete 
\cite{lovasz1973coverings}, and (ii) proper $(2,2)$ coloring of $G$
is the same as proper bicoloring of $G$. 
Therefore, we consider the following problem of proper $(r,p)$ coloring of a hypergraph.
We seek to find a set of $r$-colorings $C=\{X_1,X_2...,\}$ 
such that there are 
at 
least $\min(p,|e|)$ distinctly colored vertices in  
every hyperedge $e\in E$
in at least one $r$-coloring in $C$.
Such a set $C$ is called a {\it strong $(r,p)$ cover} for $G$.
We define the {\it strong $(r,3)$ cover} 
number $\chi^c(G,k,r,3)$ as the minimum number of $r$-colorings 
required such that each hyperedge contains at least three 
vertices of different colors in at least one of the 
$\chi^c(G,k,r,3)$
$r$-colorings. In general, 
the {\it strong $(r,p)$ cover} number $\chi^c(G,k,r,p)$ is 
defined as the minimum number of $r$-colorings of vertices 
required such that each hyperedge contains at least 
$\min(p,|e|)$ vertices of different colors in at least one of the
$\chi^c(G,k,r,p)$
$r$-colorings. 

Our problem of finding a strong $(r,p)$ cover 
for hypergraphs is motivated by
some important concepts in combinatorics and graph theory: 
bicoloring covers \cite{tmspp2014}, graph decomposition \cite{ChuGra1981}, strong coloring of hypergraphs \cite{Berge1989,BWY2012},
and  can be viewed as a generalization of the notion of separating families \cite{renyi1961,kat1966,Kat1973,weg1979,dick1969,spencer1970,mao1983}.

\begin{definition}\cite{tmspp2014}
A set of bicolorings $C=\{X_1,...,X_t\}$ is 
called a bicoloring cover for a $k$-uniform hypergraph $G(V,E)$
if every hyperedge $e \in E$ is 
properly colored in at least one bicoloring $X_i$, $X_i \in C$.
The minimum cardinality of any such set 
$C$ is called the bicoloring cover number $\chi^c(G)$.
\end{definition}
In \cite{tmspp2014}, we have shown that $\chi^c(K_n^k)$ is equal to $\left\lceil \log{\frac n {k-1}} \right\rceil$; and $\chi^c(G) = \lceil\log \chi(G) \rceil$ for arbitrary $k$-uniform hypergraphs.
Algorithms are also presented for computing bicoloring covers of size $\log |M|+2$ and $\lceil \log  \frac{|H|}{k-1}\rceil + 1$, where $M$ and $H$ denote a matching and  a hitting set, respectively.
We have also shown that a bicoloring covers of size $x$ for a $k$-uniform hypergraph $G$
can be obtained if (i) the number of hyperedges $m$ is less than or equal to $2^{(k-1)x-1}$ , or
(ii) the dependency $d$ is upper bounded by $\frac{2^{x(k-1)}}{e}-1$.

Observe that a bicoloring cover ($r$-coloring cover) of 
$G$ is equivalent to a strong $(2,2)$ cover (strong $(r,2)$ cover) 
for 
$G$. 
So, a strong $(r,p)$ cover  is a
generalization of a $r$-coloring cover, where, 
instead of two vertices of distinct colors, 
at least $p$ vertices of distinct colors are 
required for proper coloring of hyperedges.
Note that $\chi^c(G,k,r,2)=\chi^{c}_r(G)$ 
($r$-coloring cover number of $G$ \cite{tmspp2014}) and
$\chi^c(G,k,r,3)\geq \chi^c_r(G)$, $r\geq 3$.

\begin{definition}
Decomposition of a graph $G(V,E)$ deals with finding a family of graphs ${\cal{H}}=\{H_1,...,H_j\}$ such that
(i) $V(H_i) \subseteq V(G)$ for all ${H_i \in {\cal{H}}}$,
(ii) $\cap_{H_i \in \cal{H}} E(H_i)=\phi$, and
(iii) $\cup_{H_i \in \cal{H}} E(H_i)=E(G)$.
\end{definition}

The family ${\cal{H}}$ may consist of paths, cycles, bipartite graphs or matchings and there is vast literature for various kinds of decomposition of graphs 
(see \cite{ChuGra1981}).

A Strong coloring (see \cite{Berge1989}) of a $k$-uniform hypergraph $H(V,E)$ is 
the problem of $r$-coloring the vertices in $V$ in such a way 
that for every hyperedge $e \in E$, each vertex in $e$ gets 
a different color, for some $r \in {\mathbf{N}}$.
In other words, it is a function $C: V \rightarrow {1,...,r}$ such that for each hyperedge $e \in E$ 
if $v_1 \in e$ and $v_2 \in e$, then $C(v_1) \neq C(v_2)$.
The minimum value of $r$ for which $H$ admits a strong coloring 
is called the strong chromatic number of $H$, denoted by $\gamma(H)$.
Blais et.al. \cite{BWY2012} proposed a notion of semi-strong colorings of hypergraphs.

\begin{definition}[\cite{BWY2012}]
For a fixed $p \geq 2$, a $p$-strong coloring of $H$ is an assignment of colors to the vertices in $V$ such that every hyperedge $e \in E$ gets $\min\{p,|e|\}$ distinctly colored vertices. The $p$-strong chromatic number of H, denoted $\chi(H,p)$, is the minimum number of colors required to $p$-strong color $H$.
\end{definition}

The 2-section of a hypergraph \cite[p. 37]{Berge1989} $H(V,E)$ is the graph $[H]_2(V,E')$, where $(v_i,v_j) \in E'$ if $v_i \in e$ and $v_j \in e$, for some $e \in E$, $v_i,v_j \in V$. It is  easy to see that $\chi(H,k)=\chi([H]_2)$, where $\chi([H]_2)$ denote the chromatic number of $[H]_2$. So, the general results from chromatic numbers of graphs can be used for strong coloring of hypergraphs as well.
However, observe that a strong $(r,r)$ cover of $G$ cannot be directly 
compared with a $r$-coloring cover 
(see \cite{tmspp2014}) of the 2-section graph $[G]_2$ since 
we cannot directly enforce that the clique in $[G]_2$ corresponding 
to a hyperedge in $G$ is covered by a single $r$-coloring.

A strong $(r,p)$ cover can be viewed as as a 
minimal separator for the hyperedges of a 
hypergraph under certain restrictions.
A strong $(r,p)$ cover of 
$G$ is a set $C$ of $r$-colorings, $C=\{X_1,X_2...,\}$, 
such that each hyperedge $e\in E$ consists of at least $p$ 
distinctly colored vertices in at least one $r$=coloring
$X_i$, $X_i \in C$. 
Observe that the 
hyperedges properly $(r,p)$ colored by 
any $r$-coloring $X_i$ are hyperedges of an $(r,p)$-colorable 
$k$-uniform sub-hypergraph 
$H_i$  of $G$, where each hyperedge in $H_i$ consists of at least 
$p$ distinctly colored vertices. 
So the family of hypergraphs ${\cal{H}}=\{H_1,H_2...,H_{|C|}\}$, may 
be considered to be a decomposition of $G$, 
where $H_i \cap H_j$ may or may not be empty.

A strong $(r,p)$ cover can be viewed as a problem of minimal separator for hyperedges of a hypergraph under restrictions.
\begin{definition}
Let $[n]$ denote the set ${1,2,...,n}$. A set $S \subseteq [n]$ separates $i$ from $j$ if $i \in S$ and
$j \not\in S$. A set ${\cal S}$  of subsets of $[n]$ is a separator if, for each $i,j \in [n]$ with $i\neq j$,
there is a set $S$ in ${\cal S}$ that separates $i$ from $j$. If, for each $(i,j) \in  [n]*[n]$ with $i\neq j$,
there is a set $S\in{\cal S}$ that separates $i$ from $j$ and a set $T \in {\cal S}$ that separates $j$ from $i$, then ${\cal S}$ is called a complete separator. 
Moreover, with the additional constraint that the sets $S$ and $T$ that separate $i,j$ are required to be disjoint, then ${\cal S}$ is called a total separator.
\end{definition}
Let $f_0(n)$, $f_1(n)$, $f_2(n)$ denote the size of a minimal separator, a minimal complete separator and a minimal total separator of $[n]$, respectively.
R\'{e}nyi \cite{renyi1961} introduced the notion of separators and 
showed that $f_0(n)= \lceil  \log_2 n\rceil$. Let $f_0(n, _\leq k)$ and $f_0(n,_=k)$ denote the size of a smallest separator where each set is constrained to have at most $k$ elements and exactly $k$
elements, respectively.
Katona \cite{kat1966,Kat1973} showed that these two quantities are identical and
established (i) $f_0(n, _\leq k) = \lceil \log_2 n  \rceil$, if $k \geq \lfloor \frac{n}{2}\rfloor$, and
(ii)$\frac{n \log_2 n}{k \log_2 (\frac{en}{k})} \leq f_0(n, _\leq k) \leq )\frac{n \log_2 (2n)}{k \log_2 (\frac{n}{k})} $, if $k < \lfloor \frac{n}{2}\rfloor$. Wegener \cite{weg1979} improved the upper
bound to $\lceil \frac{\log_2 n}{\log_2 \lceil\frac{n}{k}\rceil}\rceil (\lceil\frac{n}{k}\rceil-1)$ for $k < \lfloor \frac{n}{2}\rfloor$.
Dickson \cite{dick1969} introduced the problem of complete separators and showed that $\lim \frac{f1(n)}{\log n} =1$.
Spencer \cite{spencer1970} showed that $f1(n)$ is the minimum $t$ such that $\binom{t}{\frac{t}{2}} \geq n$ using Sperner's Lemma.
He gave a simpler formula for $f1(n)$ as 
$f1(n)= \log_2 n + \log_2 \log_2 n + \frac{1}{2} \log_2 (\frac{\pi}{2})+o(1)$.
The notion of total separating system was introduced by Katona \cite{Kat1973}.
Mao-cheng \cite{mao1983} showed that $\min\{k+f2(\lceil \frac{n}{k} \rceil)|k=2,...,n-2\} \leq f2(n) \leq \min\{ f2(k)+f2(\lceil \frac{n}{k} \rceil)|k=2,...,n-2\}$, which gives 
$f2(n)=\min\{2p+3\lceil \log_3 (\frac{n}{2^p}) \rceil|p=0,1,2\} $.
\begin{definition}
Given a graph $G=(V,E)$, a family ${\cal S}=\{S_1,...,S_t\}$ of subsets of $V$ is called an
$E$-separating family if for any edge $e=\{x,y\} \in E$, there are disjoint $S_i$
and $S_j$ such that $x \in S_i$ and $y \in S_j$.
\end{definition}
For $G=(V,E)$, let $g(G)$ denote the minimum number of subsets in an $E$-separating family.
Edmond's problem (see \cite{mao1983}) is to determine $g(G)$ for a graph $G=(V,E)$.
Mao-cheng \cite{mao1983} demonstrated that $g(G)=f2(\chi(G))$.

We pose the following separation problem for $k$-uniform hypergraphs.
\begin{definition}
For a $k$-uniform hypergraph  $G(V,E)$, $V=[n]$,
let ${\cal S}=\{S_1,...,S_t\}$, where $\forall S_i \in {\cal S}$, $S_i=\{S_{i1},...,S_{ir}\}$, such that
(i) $S_{1j}\subset V$, $S_{1j} \neq \phi$, $1 \leq j \leq r$, 
(ii) $\cup_{j}S_{ij}=V$, and
(iii) $\forall e \in E$, there exists at least one $S_i$ such that $e$ has non-empty intersection with at least $p$ elements of $S_i$.
Then, ${\cal S}$ is called a $(r,p)$ separator for the hyperedges of $G$.
\end{definition}
It is not hard to see that the minimum cardinality of ${\cal S}$ is equal to the the strong $(r,p)$ cover number 
$\chi^c(G,k,r,p)$ for $G$.
 

Consider the scheduling of a movie carnival. 
Suppose there are $n$ movies to be played in the  carnival. 
Let $m$ be the number of viewers. 
Each viewer has a set of $k$ movies that he/she wishes to watch. 
There are $r$ time slots for playing movies per day. Each movie can be played only once during a single day. A viewer cannot watch two movies in the same time slot. In order to promote the event, the organizers  decide to give away some prizes to the set of viewers who have watched at least $p$ movies in a single day ($r  \geq p$). Given $n$ movies, $m$ viewers, and $r$ time slots, is it possible to schedule the movies in a single day in such a way that every viewer can be eligible for the prize i.e. every viewer has a chance to watch $p$ out of his list of $k$ movies?
This problem can be modeled by a strong $(r,p)$ coloring of the $k$-uniform hypergraph $G(V,E)$, where 
\begin{itemize}
\item $V$ is the set of $n$ vertices $\{V_1,...,V_n\}$, where each $V_i$ denote a distinct movie, for $1 \leq i \leq n$;
\item  $E$ is the set of $m$ hyperedges $\{E_1,...,E_m\}$, where each $E_j \subset V$ denote the list of $k$ movies in the wish list of $jth$ viewer, for $1 \leq j \leq m$.
\end{itemize}
However, there are hypergraphs which certainly do not have a strong $(r,p)$ coloring. So it may not be possible to give every viewer an opportunity to be selected for the prize.
In order to resolve this problem, the organizers decide to run the carnival for multiple days. This is equivalent to a set of strong $(r,p)$ colorings.
Now the problem is to minimize the number of days : ``Given $n$ movies, 
$m$ viewers, and $r$ time slots, what is the minimum number of days 
the carnival must run so that every viewer gets a chance to be 
eligible for the prize (i.e for every viewer, there is at least one day 
such that he/she can watch at least $p$ movies out of his list of $k$ movies)''.
Observe that the minimum number of days for which the carnival must run is the 
strong $(r,p)$ cover number $\chi^c(G,k,r,p)$ of $G$.

Unless otherwise stated, $G$ denotes a $k$-uniform hypergraph, 
having vertex set $V$ (or $V(G)$) and hyperedge set $E$ (or $E(G)$).
Observe that $\chi(G,k,r,k)=\chi(G,k,r,p)$ for all values of 
$p \geq k$. We assume $k$ is at least
$p$ in the rest of the paper.

%


\subsection{Organization of the paper}

In Section \ref{sec:preliminaries}, we describe some terminologies, preliminary results on the maximum number of hyperedges that can be covered by any $r$-coloring and prove some exact results and bounds for $\chi^c(K_n^k,k,r,p)$ for small values of $n$, $k$, $r$ and $p$. In particular, we show that $\chi^c(K_7^3,3,3,3)=\chi^c(K_8^3,3,3,3)=\chi^c(K_9^3,3,3,3)=4$ and we investigate for every valid combination of $k$, $r$ and $p$ for $n \leq 7$.

In Section \ref{sec:lowerbounds}, we show that $\chi^c(G,k,r,p)$ 
is at least (i) $\chi^c(G,k,r,p-1)$, 
(ii) $\chi^c(G,k,r-1,p-1)$, and
(iii) $\chi^c(G',k-1,r,p-1)$, where $G'$ is any 
$(n-1)$ vertex sub-hypergraph of $G$.

In Section \ref{sec:upperbounds}, we establish a upper bound on $\chi^c(G,k,3,3)$ of $\frac{1}{3}(\frac{n}{k-1})^{2.71} + \log_{\frac{3}{2}}\frac{n}{k-1}-1$ if $k$ is odd and $\frac{1}{3}(\frac{n}{k-2})^{2.71} + \log_{\frac{3}{2}}\frac{n}{k-2}-1$ if $k$ is even.
We show that the upper bound reduces drastically (to $O(n^{1.59})$ from $O(n^{2.71})$ for a fixed $k$) when  4 colors instead of 3 colors are used.
We prove the general upper bound of $\chi^c(G,k,r,p)\leq \chi^c(K_n^k,k,r,p) \leq \Big(\frac{n(p-1)}{(k-1-l)r}\Big)^{\log_{\frac{r}{p-1}}D}+\log_{\frac{r}{p-1}}{\Big(\frac{n}{k-1-l}\Big)}-1$,  for some fixed $l$, $0 \leq l \leq p-2$, on $\chi^c(K_n^k,k,r,p)$, where $D=\lceil \frac{\binom{r}{p-1}}{\lfloor \frac{r}{p-1}\rfloor}\rceil.$

In Section \ref{sec:semichromatic}, we relate the $p$-strong chromatic number $\chi(G,p)$ and strong $(r,p)$ cover number $\chi^c(G,k,r,p)$ and show that ${\chi^c(G,k,r,p)} \geq \log_r \chi(G,p)$.

In Section \ref{sec:probabilistic}, we show that every $k$-uniform hypergraph $G(V,E)$ with number of hyperedges $|E|$ less than or equal to $\frac{1}{2}({\frac{r^k}{(p-1)^k \binom{r}{p-1}}})^x$ has a strong $(r,p)$ cover of size $x$. We also prove that if the maximum dependency of any hyperedge of $G$ is  less than or equal to $\frac{1}{e}({\frac{r^k}{(p-1)^k \binom{r}{p-1}}})^x-1$, then $\chi^c(G,k,r,p) \leq x$.

\section{Preliminaries}
\label{sec:preliminaries}

Given fixed $r$ and $p$, a hypergraph with too many hyperedges may 
have
no $r$-coloring of its vertices
enforcing a proper $(r,p)$ coloring on every hyperedge of the hypergraph. 
It is interesting to find the maximum number $m$ of hyperedges such 
that (i) there exists a $k$-uniform hypergraph $G(V,E)$ with 
a proper $(r,p)$ coloring, and (ii) 
$G(V,E\cup e)$ does not have any proper $(r,p)$ coloring  
for every hyperedge 
$e \not\in E$.
Let $M(n,k,r,p)$ denote the maximum number of hyperedges of 
any $n$-vertex $k$-uniform hypergraph $G(V,E)$ that can be 
properly $(r,p)$ colored by a single $r$-coloring.
We have the following theorem from \cite{tmspp22015}.

\begin{theorem}\label{thm:boundminmax}\cite{tmspp22015}
For a fixed $n$, $k$, $r$ and $p$, the maximum number of properly $(r,p)$ colored $k$-uniform hyperedges $M(n,k,r,p)$ on any $n$-vertex hypergraph $G$ is at most  $\binom{n}{k} - \sum_{i=1}^{p-1} m(n_1,k,r,i)$ and at least 
$\binom{n}{k} - \sum_{i=1}^{p-1} m(n_2,k,r,i)$,
where  $n_1=\lfloor\frac{n}{r} \rfloor \cdot r$, $n_2=\lceil \frac{n}{r} \rceil \cdot r$, and
$m(n',k,r,i)= \binom{r}{i}\Big( \binom{\frac{n'}{r} i}{k}- i\binom{ \frac{n'}{r} (i-1)}{k}+\binom{i}{2}\binom{ \frac{n'}{r} (i-2)}{k}... \allowbreak (-1)^c \binom{i}{c}\binom{ \frac{n'}{r}  c}{k} \Big)$, and,
$c$ is the smallest integer such that $\frac{n'}{r} c >= k$.
Moreover,  the number of properly $(r,p)$ colored hyperedges is maximized when the $r$-coloring is balanced.
\end{theorem}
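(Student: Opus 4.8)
The plan is to reduce everything to a statement about a single coloring. Fix $k\ge p$. An $r$-coloring $C$ of $V$ is a partition into color classes $V_1,\dots,V_r$ with $|V_j|=a_j$, and a $k$-subset $e$ is properly $(r,p)$-colored by $C$ exactly when $e$ meets at least $p$ of these classes; hence the number of hyperedges that $C$ fails to color properly is $b(C):=\#\{e\in\binom Vk:\ e\text{ meets at most }p-1\text{ classes}\}$, and $M(n,k,r,p)=\binom nk-\min_C b(C)$ (for a given coloring the best hypergraph simply takes all $k$-subsets that coloring handles). So the two bounds and the ``moreover'' clause all follow once we (a) evaluate $b$ on a balanced coloring and (b) show a balanced coloring minimizes $b$.

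For (a), classify a $k$-subset by the exact set $S\subseteq[r]$ of colors it uses: by inclusion--exclusion the number using exactly the colors of $S$ is $\sum_{S'\subseteq S}(-1)^{|S|-|S'|}\binom{\sum_{j\in S'}a_j}{k}$, whence $b(C)=\sum_{i=1}^{p-1}\sum_{|S|=i}\sum_{S'\subseteq S}(-1)^{i-|S'|}\binom{\sum_{j\in S'}a_j}{k}$. When $C$ is balanced every $a_j=n'/r$, so the subsets $S'$ of a fixed size $j$ contribute $(-1)^{i-j}\binom ij\binom{(n'/r)j}{k}$ in total, and after the $\binom ri$ choices of the $i$ active colors this is exactly $m(n',k,r,i)$ — the number of $k$-subsets meeting exactly $i$ classes — the sum running down to the index $c$ of the statement, below which $\binom{(n'/r)j}{k}=0$. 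Hence $b(C)=\sum_{i=1}^{p-1}m(n',k,r,i)$ for any balanced coloring of $n'$ vertices; the bookkeeping is confirmed by the case $n'=r$, where each class is a singleton, every $k$-subset meets exactly $k$ classes, and $m(r,k,r,i)$ equals $\binom rk$ for $i=k$ and $0$ otherwise.

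For (b), the engine is the trivial monotonicity principle that, for every $j$ and $t$, the number of $j$-subsets of $V$ using at most $t$ colors — and likewise the number using exactly $t$ colors — cannot decrease when a vertex is added to a color class, since a $j$-subset missing the new vertex is unaffected. Now compare two colorings $C_1,C_2$ differing only in that a vertex $v$ lies in class $V_a$ in $C_1$ and in class $V_b$ in $C_2$, with $|V_a|>|V_b|$ (sizes in $C_1$). Only $k$-subsets through $v$ change status, and writing such a subset as $T\cup\{v\}$ and cancelling the part common to both colorings one obtains $b(C_2)-b(C_1)=g_a-g_b$, where $g_x$ is the number of $(k-1)$-subsets of $V\setminus(\{v\}\cup V_x)$ using exactly $p-1$ colors. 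But $g_a$ and $g_b$ count $(k-1)$-subsets over ground sets built from the same $r-2$ untouched classes together with one further class, of size $|V_b|$ in $g_a$ and of size $|V_a|-1\ge|V_b|$ in $g_b$; so the monotonicity principle gives $g_b\ge g_a$, i.e.\ moving $v$ from a larger to a smaller class never increases $b$. Iterating this proves the ``moreover'' clause and identifies $\min_C b(C)$ with $b$ of a maximally balanced coloring on $n$ vertices. That coloring is obtained from the balanced coloring of $n_1=\lfloor n/r\rfloor r$ vertices by adding $n-n_1$ vertices and, in turn, the balanced coloring of $n_2=\lceil n/r\rceil r$ vertices is obtained from it by adding $n_2-n$ more; so the monotonicity principle yields $\sum_{i=1}^{p-1}m(n_1,k,r,i)\le\min_C b(C)\le\sum_{i=1}^{p-1}m(n_2,k,r,i)$, and subtracting from $\binom nk$ gives the two stated estimates for $M(n,k,r,p)$. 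I expect the one genuinely delicate step to be the cancellation that collapses $b(C_2)-b(C_1)$ to the clean difference $g_a-g_b$; the rest is inclusion--exclusion together with the monotonicity principle.
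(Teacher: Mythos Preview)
The paper does not prove this theorem at all: it is quoted verbatim from \cite{tmspp22015} and used as a black box, so there is no ``paper's own proof'' to compare against.  Your argument is therefore being judged on its own merits.

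Your approach is sound.  The reduction $M(n,k,r,p)=\binom{n}{k}-\min_C b(C)$, the inclusion--exclusion evaluation of $b$ on a balanced coloring, and the swap/monotonicity route to the minimization all go through.  The step you flag as delicate---collapsing $b(C_2)-b(C_1)$ to $g_a-g_b$---does work: only $e=T\cup\{v\}$ can change status, and a short case analysis on whether $a,b$ lie in the color set of $T$ shows the difference equals
\[
\#\{T:\ |S_T|=p-1,\ b\in S_T,\ a\notin S_T\}-\#\{T:\ |S_T|=p-1,\ a\in S_T,\ b\notin S_T\},
\]
which is exactly $g_a-g_b$ once you add and subtract the common contribution of those $T$ using $p-1$ colors with $a,b\notin S_T$.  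The final sandwich $\sum m(n_1,\cdot)\le\min_C b(C)\le\sum m(n_2,\cdot)$ via the same monotonicity is clean.  One minor remark: in matching your inclusion--exclusion to the displayed formula for $m(n',k,r,i)$, note that the paper's last term appears to carry a sign typo (it should be $(-1)^{\,i-c}$ rather than $(-1)^c$ if one indexes by the lower binomial argument); your derivation gives the correct expression regardless.
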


\begin{corollary}\label{cor:exactmax}\cite{tmspp22015}
The maximum number of properly $(r,r)$ colored hyperedges of a $K_n^r$ in any $r$-coloring
(i) is $M(n,r,r,r)=|A_1||A_2|...|A_r|$, and,
(ii) the $r$-coloring that maximizes the number of properly colored hyperedges splits the vertex set into equal sized parts.

\end{corollary}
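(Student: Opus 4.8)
The plan is to exploit that a hyperedge of $K_n^r$ has exactly $r=k$ vertices, so it is properly $(r,r)$ colored by a coloring $\phi:V\to\{1,\dots,r\}$ precisely when its $r$ vertices get pairwise distinct colors, i.e.\ when it is rainbow. First I would fix an arbitrary $r$-coloring $\phi$ with color classes $A_1,\dots,A_r$ (some possibly empty) and count its rainbow $r$-subsets: every rainbow $r$-subset contains exactly one vertex of each color, and conversely every transversal $\{a_1,\dots,a_r\}$ with $a_i\in A_i$ is a distinct rainbow $r$-subset of $V$. Hence the number of hyperedges of $K_n^r$ properly $(r,r)$ colored by $\phi$ equals $|A_1||A_2|\cdots|A_r|$, which, after maximizing over $\phi$, gives (i) (when $n<r$ there are no $r$-subsets at all, so the count is $0$, consistent with the formula since some class is then forced to be empty).

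For part (ii) I would run the standard exchange/smoothing argument on the product $\prod_{i=1}^r|A_i|$. Take a coloring maximizing this product; if it is not balanced there exist classes with $|A_i|\ge|A_j|+2$. Moving one vertex from $A_i$ to $A_j$ multiplies the product by $\frac{(|A_i|-1)(|A_j|+1)}{|A_i||A_j|}$, and since $(|A_i|-1)(|A_j|+1)-|A_i||A_j|=|A_i|-|A_j|-1\ge1$, the product strictly increases whenever all classes are nonempty --- a contradiction; and for $n\ge r$ a maximizing coloring cannot have an empty class, since a coloring with every class of size at least $1$ already has product at least $1$. Therefore the maximizing coloring has all class sizes within $1$ of one another, i.e.\ it is balanced, with $n-r\lfloor n/r\rfloor$ classes of size $\lceil n/r\rceil$ and the rest of size $\lfloor n/r\rfloor$; substituting these into $|A_1|\cdots|A_r|$ yields the exact value of $M(n,r,r,r)$ (for $n<r$ a balanced coloring is trivially optimal, both sides being $0$).

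There is essentially no obstacle; the only points needing care are making the bijection between rainbow $r$-subsets and transversals explicit and the empty-class bookkeeping in the exchange step. Alternatively one can read the result off Theorem~\ref{thm:boundminmax} with $p=k=r$: there $\sum_{i=1}^{p-1}m(\cdot,r,r,i)$ counts exactly the $r$-subsets spanning at most $r-1$ colors, so $\binom{n}{r}$ minus this quantity is the number of rainbow $r$-subsets, which for a balanced coloring equals $\prod_i|A_i|$; and the final sentence of Theorem~\ref{thm:boundminmax} already asserts that a balanced coloring maximizes the count, giving (ii).
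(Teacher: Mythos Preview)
Your argument is correct. The bijection between properly $(r,r)$ colored $r$-edges and transversals of the color classes is exactly right, and the exchange (smoothing) argument for (ii) is the standard one; your handling of the empty-class edge case is also clean.

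As for comparison: the paper does not actually prove this corollary. It is imported verbatim from the cited reference \cite{tmspp22015} and presented as a consequence of Theorem~\ref{thm:boundminmax}, which is likewise only quoted. So there is no in-paper proof to match against. Your direct combinatorial argument is more elementary and self-contained than the route the paper gestures at (specializing Theorem~\ref{thm:boundminmax} to $p=k=r$ and simplifying the inclusion--exclusion sum $\sum_{i=1}^{r-1} m(\cdot,r,r,i)$), and it avoids relying on an external result whose proof is not reproduced here. The alternative you sketch at the end---reading (ii) off the last sentence of Theorem~\ref{thm:boundminmax}---is precisely what the paper implicitly does by labeling this a corollary.
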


Theorem \ref{thm:boundminmax} directly implies the following lower bound for $\chi^c(G,k,r,p)$ based on the number of
hyperedges of $G$.

\begin{corollary}\label{cor:almostequal}

Let $G(V,E)$ be any $n$-vertex 
$k$-uniform hypergraph, and $r$ and $p$ be fixed positive integers, where
$r \geq p$. 
Then, 
then $\chi^c(G,k,r,p) \geq \frac {|E|}  {M(n,k,r,p)}$. 

\end{corollary}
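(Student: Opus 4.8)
The plan is to prove this by a straightforward double-counting (pigeonhole) argument applied to an optimal cover. First I would set $t = \chi^c(G,k,r,p)$ and fix an optimal strong $(r,p)$ cover $C = \{X_1, X_2, \ldots, X_t\}$ of $G$. By the very definition of a strong $(r,p)$ cover, every hyperedge $e \in E$ is properly $(r,p)$ colored by at least one of the $r$-colorings $X_i$; equivalently, if for each $i$ we let $E_i \subseteq E$ be the set of hyperedges of $G$ that $X_i$ properly $(r,p)$ colors, then $\bigcup_{i=1}^{t} E_i = E$.

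The second step is to bound $|E_i|$ for each fixed $i$. The hyperedges in $E_i$ form an $n$-vertex $k$-uniform hypergraph which, by construction, admits a single $r$-coloring (namely $X_i$) that properly $(r,p)$ colors all of its hyperedges. Hence, by the definition of $M(n,k,r,p)$ as the maximum number of hyperedges of any $n$-vertex $k$-uniform hypergraph that can be properly $(r,p)$ colored by one $r$-coloring, we get $|E_i| \le M(n,k,r,p)$ for every $i$. (This is precisely the quantity estimated in Theorem~\ref{thm:boundminmax}, though for the corollary only its definitional meaning is needed.)

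Finally I would combine the two observations: $|E| = \left| \bigcup_{i=1}^t E_i \right| \le \sum_{i=1}^t |E_i| \le t \cdot M(n,k,r,p)$, and dividing by $M(n,k,r,p)$ yields $\chi^c(G,k,r,p) = t \ge |E| / M(n,k,r,p)$, as claimed. There is no genuine obstacle in this argument; the only point meriting a word of care is that $M(n,k,r,p)$, being a worst-case maximum over all $n$-vertex $k$-uniform hypergraphs, is a legitimate upper bound on $|E_i|$ for our particular $G$ under the particular coloring $X_i$ — which it is, since the properly colored hyperedges of $G$ are themselves such a hypergraph.
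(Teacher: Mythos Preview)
Your argument is correct and is exactly the pigeonhole/double-counting argument the paper intends: the paper does not spell out a proof at all, merely remarking that the corollary is a direct consequence of the definition of $M(n,k,r,p)$ (via Theorem~\ref{thm:boundminmax}), and your write-up makes that implication explicit. There is nothing to add or adjust.
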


Using Theorem \ref{thm:boundminmax}, Corollary \ref{cor:exactmax}, and Corollary \ref{cor:almostequal},
we have enumerated all the exact values of $\chi^c(K_n^k,k,r,p)$ for all valid combinations of the parameters upto $n \leq 7$ (see Table \ref{tab:values}).
We demonstrate the proof for exact values of $\chi^c(K_n^k,k,r,p)$ only for $n=7,8,9$, and $k=r=p=3$.

\begin{theorem}
$\chi^c(K_7^3,3,3,3)=\chi^c(K_8^3,3,3,3)=\chi^c(K_9^3,3,3,3)=4$.
\end{theorem}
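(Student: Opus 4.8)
The plan is to pin down all three values by sandwiching: prove $\chi^c(K_n^3,3,3,3)\le 4$ for $n\in\{7,8,9\}$ by a single explicit construction, and $\chi^c(K_n^3,3,3,3)\ge 4$ for each of these $n$, where only $n=7$ requires real work. For the upper bound I would identify $V(K_9^3)$ with the nine points of the affine plane $\mathrm{AG}(2,3)$ (the cells of a $3\times 3$ grid). Each of its four parallel classes of lines (rows, columns, and the two diagonal directions) partitions the nine points into three triples; take the four $3$-colorings ``colour a point by the line of that parallel class through it''. For any triple $\{x,y,z\}$ the three pairs $xy,xz,yz$ lie on at most three lines and hence involve at most three of the four parallel classes, so in the remaining class no two of $x,y,z$ are collinear and $\{x,y,z\}$ is rainbow there. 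Thus these four colourings form a strong $(3,3)$ cover of $K_9^3$, and restricting them to any $8$ (resp.\ $7$) vertices yields a strong $(3,3)$ cover of $K_8^3$ (resp.\ $K_7^3$). Hence $\chi^c(K_7^3,3,3,3)\le\chi^c(K_8^3,3,3,3)\le\chi^c(K_9^3,3,3,3)\le 4$.

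For the lower bounds at $n=8,9$, Corollary~\ref{cor:exactmax} and Theorem~\ref{thm:boundminmax} say a single $3$-colouring makes rainbow at most $M(9,3,3,3)=3\cdot 3\cdot 3=27$ triples of $K_9^3$ and at most $M(8,3,3,3)=3\cdot 3\cdot 2=18$ triples of $K_8^3$; since $\binom{9}{3}=84$ and $\binom{8}{3}=56$, Corollary~\ref{cor:almostequal} gives $\chi^c(K_9^3,3,3,3)\ge\lceil 84/27\rceil=4$ and $\chi^c(K_8^3,3,3,3)\ge\lceil 56/18\rceil=4$.

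The crux is the lower bound for $n=7$, where counting only gives $\lceil 35/12\rceil=3$. Suppose $\phi_1,\phi_2,\phi_3$ were a size-$3$ cover, and let $R_i$ be the set of triples made rainbow by $\phi_i$. A $3$-colouring of $[7]$ using all three colours has class-size type $(5,1,1)$, $(4,2,1)$, $(3,3,1)$ or $(3,2,2)$, with $|R_i|=5,8,9,12$ respectively (a type with an empty class gives $|R_i|=0$); since $|R_1\cup R_2\cup R_3|=\binom{7}{3}=35$ and $12+12+9<35$, each $\phi_i$ must have type $(3,2,2)$ with $|R_i|=12$. The key lemma I would prove is: \emph{any two type-$(3,2,2)$ colourings of $[7]$ share at least two rainbow triples}. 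Writing their partitions as $\pi=(P_1,P_2,P_3)$ and $\pi'=(Q_1,Q_2,Q_3)$ and forming $N=(\,|P_a\cap Q_b|\,)_{a,b}$, the row and column sums of $N$ are each a permutation of $(3,2,2)$, and the number of triples transversal to both partitions is exactly $\operatorname{perm}(N)=\sum_{\sigma\in S_3}\prod_a N_{a\,\sigma(a)}$. A short enumeration then shows every nonnegative integer $3\times 3$ matrix with these margins has permanent $\ge 2$: it is nonzero by Frobenius--K\"onig (a $2\times 2$ all-zero submatrix would, with row sums $\ge 2$, force a column sum $\ge 4>3$), and it cannot equal $1$, since a permanent of $1$ would require a unique transversal consisting of three $1$-entries, which the margins rule out by a case check on the remaining six positions. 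Granting the lemma, $|R_i\cap R_j|\ge 2$ for all $i\ne j$, so $|R_1\cup R_2|\le 24-2=22$ and $|R_1\cup R_2\cup R_3|\le |R_1\cup R_2|+12-|R_3\cap R_1|\le 32<35$, a contradiction; hence $\chi^c(K_7^3,3,3,3)\ge 4$, and combining with the upper bound completes the proof.

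The main obstacle is precisely this $n=7$ lower bound: none of the general lower bounds of Sections~\ref{sec:lowerbounds}--\ref{sec:semichromatic} exceeds $2$ for $K_7^3$, so the structural permanent argument — together with the small finite verification that no admissible $3\times 3$ matrix has permanent below $2$ — seems unavoidable; the remaining ingredients ($n=8,9$ lower bounds and the $\mathrm{AG}(2,3)$ construction) are routine.
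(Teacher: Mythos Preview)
Your argument is correct. The overall sandwich strategy --- prove $\chi^c(K_9^3,3,3,3)\le 4$ by an explicit construction and $\chi^c(K_7^3,3,3,3)\ge 4$ by a combinatorial obstruction, with monotonicity in between --- is exactly what the paper does, and your four colourings are literally the paper's $X_1,X_2,X_3,X_4$ (the paper lists them and appeals to a figure; you identify them as the parallel classes of $\mathrm{AG}(2,3)$ and give the clean ``three pairs use at most three directions'' verification).

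Where you genuinely diverge is the $n=7$ lower bound. The paper fixes a first $(3,2,2)$ colouring and states, citing a computer check, that any subsequent colouring makes at most $9$ \emph{new} triples rainbow, whence $12+9+9=30<35$; the non-$(3,2,2)$ case is dismissed by a cruder count. Your route is fully by hand: force all three colourings to have type $(3,2,2)$ via $12+12+9<35$, encode the interaction of two such colourings by the $3\times 3$ intersection matrix $N$, observe $|R_i\cap R_j|=\operatorname{perm}(N)$, and show $\operatorname{perm}(N)\ge 2$ for all admissible margins. This replaces the paper's code-verified ``at most $9$ new'' by a weaker but human-checkable ``at most $10$ new'', and the resulting bound $22+12-2=32<35$ still suffices. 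Your permanent lemma is sound; the $\operatorname{perm}(N)\neq 1$ step can even be shortened: a $0$--$1$ support with a unique perfect matching must have a vertex of degree $1$, forcing some row or column of $N$ to have a single nonzero entry equal to $1$, contradicting the margin $\ge 2$. The payoff of your approach is a self-contained proof with no computer assistance; the paper's approach yields a slightly sharper intermediate bound but at the cost of an unstated machine enumeration.
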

\begin{figure}
	\centering
	\includegraphics[scale=0.7]{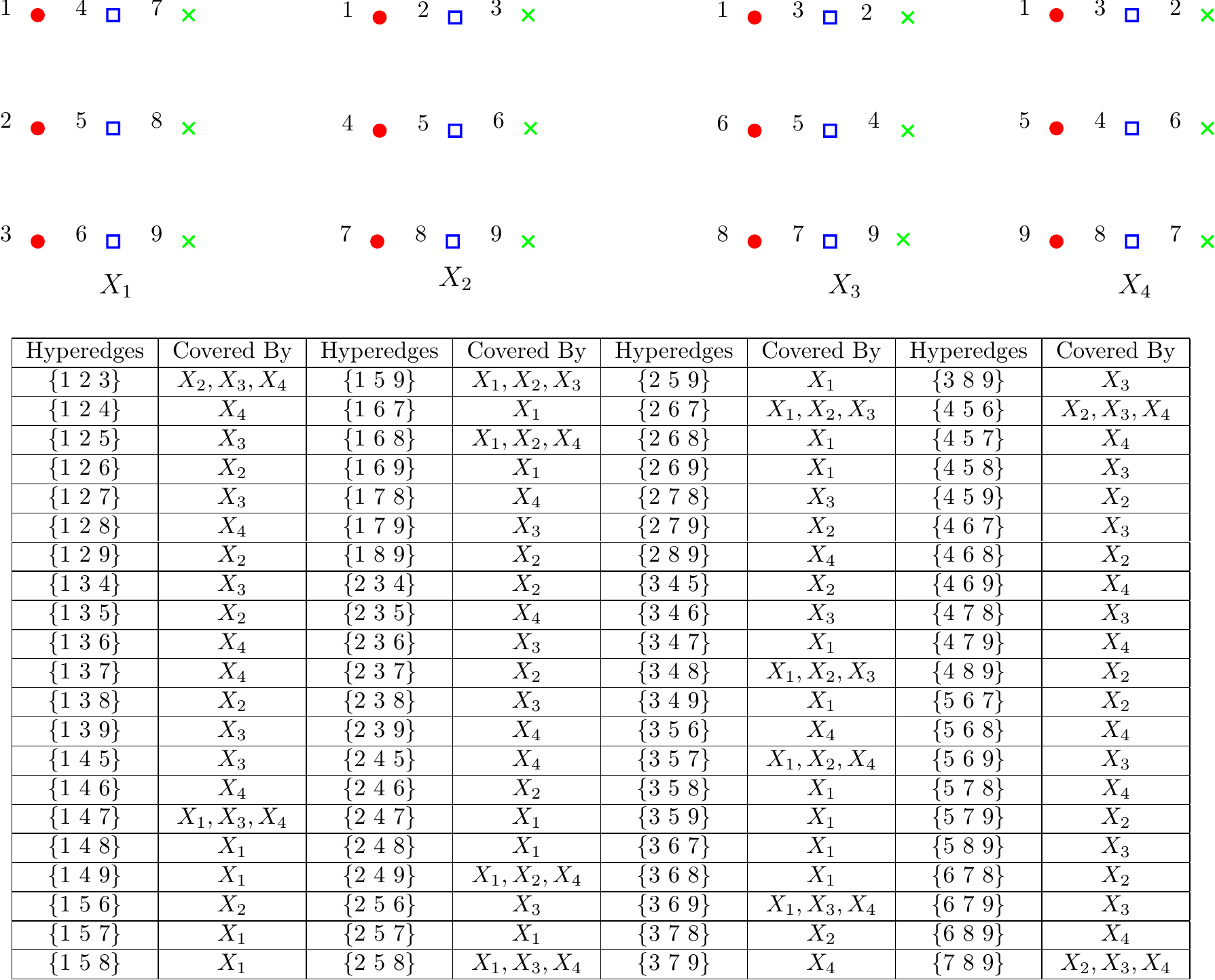}
	\caption{$\chi^c(K_9^3,3,3,3)=4$: $C_{11}=\{X_1,X_2,X_3,X_4\}$ is a proper (3,3) cover for $K_9^3$.}
	\label{fig:k9333}
\end{figure}
\begin{proof}
We show the relations by proving (1). $\allowbreak\chi^c(K_7^3, 3,3,3) \geq 4$, and (2). $\chi^c(K_9^3,3,3,3)\leq4$.
$K_7^3$ contains 35 hyperedges. Using Corollary \ref{cor:exactmax}, the maximum number of hyperedges that can be properly (3,3) colored is 12 hyperedges using an almost equal partition. Let these colorings be Type I coloring.
We consider two mutually exclusive cases: colorings that include at least one Type I coloring and colorings that do not contain any Type I coloring.
\begin{enumerate}[leftmargin=*]
\item Without loss of generality, let the coloring be $X_1=\{\text{red: \{$v_1,v_2,v_3$\}, blue: \{$v_4,v_5$\}, green: \{$v_6,v_7$\}}\}$: a Type I coloring. 
	We can verify with a simple code that after $X_1$, any coloring can properly (3,3) color at most 9 new hyperedges.
	If we do not use any coloring that divides the vertex set into almost equal partitions (i.e. type-1 coloring), then it is easy to see that the maximum number of hyperedges that can be properly (3,3) colored cannot exceed 8.
	we need at least 5 type-2 colorings to cover the hypergraph. 
	So any set of three 3-colorings can properly (3,3) color at most $12+9+9=30<35$ hyperedges. So we need at least 4 3-colorings of vertices to cover $K_7^3$ and hence $\chi^c(K_7^3,3,3,3) \geq 4$.
	\item  Let $X_1=\{\text{ red: \{$v_1,v_2,v_3$\}, blue: \{$v_4,v_5,v_6$\}, green: \{$v_7,v_8,v_9$\}}\}$, 
	$X_2=\{\text{ red: }\{v_1,v_4,v_7\},\text{ blue: }\{v_2, \allowbreak v_5,  v_8\},\text{ green: }\{v_3, v_6,v_9\}\}$, $X_3=\{\text{red:\{$v_1,v_6,v_8$\}, blue: \{$v_3,v_5,v_7$\}, green: \{$v_2,v_4,v_9$\}}\}$, 
	$X_4=\{\text{red: }\{\allowbreak v_1,v_5,v_9\},\text{ blue: }\{v_3,v_4,v_8\},\text{ green: }\{v_2,\allowbreak v_6,v_7\}\}$. 
Let $C_{11}=\{X_1,X_2,X_3,X_4\}$. From Fig. \ref{fig:k9333}, it is clear that $C_{11}$ properly (3,3) colors every hyperedge of $K_9^3$. So 
	$\chi^c(K_9^3,3,3,3) \leq 4$.
\end{enumerate}
\qed
\end{proof}
\begin{table}[!htb]
	\centering
	\scalebox{0.85}{
		\begin{tabular}{|c|c|c|c|c|c|c|c|c|c|c|c|c|c|c|c|}
			\hline 
			\backslashbox{$n$, $k$}{$r$, $p$} & 3,3 & 4,3 & 4,4 & 5,3 &5,4 &5,5 &6,3 &6,4 & 6,5 & 6,6 & 7,3&7,4 &7,5 & 7,6 & 7,7\\\hline
			4,3 & 2 & 1 & 1 & & & & & & & & & & &  & \\ \hline
			4,4 & 1 & 1 & 1 & & & & & & & & & & &  & \\ \hline
			5,3 & 3 & 2 & 2 & 1 & & & & & & & & & &  & \\ \hline
			5,4 & 2 & 1 & 3 & 1 & 1 & 1 & & & & & & & &  & \\ \hline
			5,5 & 1 & 1 & 1 & 1 & 1 & 1 & & & & & & & &  & \\ \hline
			6,3 & 3 & 3 & 3 & 2 & 2 & 2 & 1 & 1 & 1 & 1 & & & &  & \\ \hline
			6,4 & 2 & 2 & (4,5) & 1 & 3 & 3 & 1 & 1 & 1 & 1 & & & &  & \\ \hline
			6,5 & 1 & 1 & 2 & 1 & 1 & 3 & 1 & 1 & 1 & 1 & & & &  & \\ \hline
			6,6 & 1 & 1 & 1 & 1 & 1 & 1 & 1 & 1 & 1 & 1 & & & &  & \\ \hline
			7,3 & 4 & 3* & (6,-)& 2 & 2 & 2 & 2 & 2 & 2 & 2 & 1 & 1 & 1 & 1 & 1 \\ \hline
			7,4 & (3,-*) & 2 & (5,-) & 2 & 3 & 3 & 1 & 3 & 3 & 3 & 1 & 1 & 1 & 1 & 1 \\ \hline
			7,5 & 2 & 1 & (3,-*)& 2 & 2 & (6,-)& 1 & 1 & 3 & 3 & 1 & 1 & 1 & 1 & 1 \\ \hline
			7,6 & 2 & 1 & 2 & 1 & 1 & 2 & 1 & 1 & 1 & (4,-) & 1 & 1 & 1 & 1 & 1 \\ \hline
			7,7 & 1 & 1 & 1 & 1 & 1 & 1 & 1 & 1 & 1 & 1 & 1 & 1 & 1 & 1 & 1 \\ \hline
			8,3 & 4 &  & & & & & & & & & & & &  & \\ \hline
			9,3 & 4 &  & & & & & & & & & & & &  & \\ \hline
		\end{tabular}}
		\label{tab:values}
		\caption{$\chi^c(K_n^k,k,r,p)$ for small values of $n$, $k$, $r$ and $p$. Table value $(a,b)$ indicate that $a$ is the lower bound and $b$ is the upper bound for corresponding $\chi^c(K_n^k,k,r,p)$.
			For the '*' indicated values, the proof is given in the appendix.}
	\end{table}
	
	\section{Lower bounds for \texorpdfstring{$\chi^c(K_n,k,r,p)$}{}}
\label{sec:lowerbounds}

In this section, we study the behavior of the strong $(r,p)$ cover number with respect to changes in its parameters, namely $n$, $k$, $r$ and $p$. 

\begin{proposition}
\label{prop:lb}
Let $G_n$ denote a $n$ vertex $k$-uniform hypergraph. Then,
\begin{align*}
\chi^c(G_n,k,r,p) \geq \begin{cases}
	\chi^c(G_n,k,r+1,p). \hspace{2cm}\hfill (1)\\
    \chi^c(G_n,k,r,p-1).  \hfill (2)\\
    \chi^c(G_n,k,r-1,p-1). \hfill (3) \\
    \chi^c(G_{n-1},k-1,r,p-1). \hfill (4) 
\end{cases}
\end{align*}
\end{proposition}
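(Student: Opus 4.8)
The plan is to prove each of the four inequalities by converting an optimal strong $(r,p)$ cover of the left-hand hypergraph into a strong cover of the right-hand hypergraph of no larger size. Throughout I will use the coloring-function description of a cover: a family $\{X_1,\dots,X_t\}$ with each $X_i:V\to\{1,\dots,r\}$ such that every hyperedge $e$ satisfies $|X_i(e)|\ge\min(p,|e|)$ for at least one $i$. Since $G$ is $k$-uniform and $k\ge p$ by the paper's standing assumption, every hyperedge has $|e|=k$, so $\min(p,|e|)=p$ and $\min(p-1,|e|)=p-1$ throughout. Fix $\{X_1,\dots,X_t\}$ an optimal strong $(r,p)$ cover of $G_n$, so $t=\chi^c(G_n,k,r,p)$, and in each case I will check per hyperedge that the $i$ which worked before still works.

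For (1), every $X_i$, read as a map into $\{1,\dots,r+1\}$, is an $(r+1)$-coloring, and the condition $|X_i(e)|\ge p$ is unaffected by this reinterpretation; hence the same family is already a strong $(r+1,p)$ cover of $G_n$. For (2), from $\min(p-1,|e|)\le\min(p,|e|)$ it follows that any hyperedge properly $(r,p)$-colored by $X_i$ is also properly $(r,p-1)$-colored by $X_i$, so the same family is a strong $(r,p-1)$ cover of $G_n$. Both are immediate; the substance is in (3) and (4).

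For (3), I would let $X_i'$ be the $(r-1)$-coloring obtained from $X_i$ by merging color classes $r-1$ and $r$ into one. Merging two classes lowers the number of colors on any hyperedge $e$ by at most one, so if $X_i$ properly $(r,p)$-colors $e$ then $|X_i'(e)|\ge|X_i(e)|-1\ge p-1=\min(p-1,k)$; thus $X_i'$ properly $(r-1,p-1)$-colors $e$, and $\{X_1',\dots,X_t'\}$ is a strong $(r-1,p-1)$ cover, giving $\chi^c(G_n,k,r-1,p-1)\le t$ (note $r-1\ge p-1$, so this is meaningful). For (4), let $G_{n-1}$ be the link of an arbitrary vertex $v$ of $G_n$, i.e. the $(k-1)$-uniform hypergraph on $V\setminus\{v\}$ with hyperedge set $\{e\setminus\{v\}:e\in E,\ v\in e\}$, and let $X_i'=X_i|_{V\setminus\{v\}}$. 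A hyperedge $e'$ of $G_{n-1}$ is $e\setminus\{v\}$ for some $e\in E$ with $|e|=k$; choosing $i$ with $X_i$ properly $(r,p)$-coloring $e$ gives $|X_i'(e')|=|X_i(e)\setminus\{X_i(v)\}|\ge p-1=\min(p-1,k-1)$. Hence $\{X_1',\dots,X_t'\}$ is a strong $(r,p-1)$ cover of $G_{n-1}$, so $\chi^c(G_{n-1},k-1,r,p-1)\le t$, for every choice of $v$.

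The arguments are all short; the points that need care are: (i) keeping the reductions inside the regime $k\ge p$, so that the $\min$'s collapse to $p$ and $p-1$ on every hyperedge — this is exactly where $k$-uniformity is used; and (ii) if one insists on the separator formulation in which all $r$ color classes are nonempty, checking that the merged coloring of (3) and the restricted coloring of (4) can be repaired to use all available colors, which is routine since re-coloring one vertex into an otherwise-empty class never decreases $|X_i(e)|$ for any hyperedge $e$. I expect (4) — recognizing the vertex-link hypergraph as the correct $(n-1)$-vertex object — to be the only step requiring a moment's thought, and even there the verification is the one-line counting estimate above.
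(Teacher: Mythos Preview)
Your proofs of (1)--(3) coincide with the paper's in all essentials: reinterpret each $r$-coloring as an $(r+1)$-coloring with an unused color for (1), note that $p$-colorful implies $(p-1)$-colorful for (2), and merge the top two color classes for (3).

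For (4) your argument is correct, but your choice of $G_{n-1}$ is strictly smaller than the paper's. The paper defines $G_{n-1}$ on $V\setminus\{n\}$ with hyperedges of two kinds: (a) $e\setminus\{n\}$ for every $e\in E$ with $n\in e$ --- exactly your link edges --- \emph{and} (b) every $(k-1)$-subset of each $e\in E$ with $n\notin e$. Since the proposition statement leaves $G_{n-1}$ unspecified, the paper's construction is the intended object; having more hyperedges, it gives an inequality at least as strong as yours. That said, your one-line estimate $|X_i'(e')|\ge |X_i(e)|-1\ge p-1$ applies verbatim to type-(b) edges as well (take $e$ to be the $k$-edge containing $e'$; deleting one vertex drops the color count by at most one), so extending your proof to cover the paper's $G_{n-1}$ costs nothing beyond writing down the larger edge set.
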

\begin{proof}

(1) To see that $\chi^c(G_n,k,r,p) \geq \chi^c(G_n,k,r+1,p)$,
observe that any strong $(r,p)$ cover for $G_n$ is a trivial strong $(r+1,p)$ cover for $G_n$.

(2) The first inequality follows from the definition: if some hypergraph $G_n$ on $n$ vertices has a strong $(r,p)$ cover of size $\chi^c(G_n,k,r,p)$, the same cover ensures a $(r,p-1)$ strong cover itself and the cover number can only reduce with reduction of $p$. Therefore, $\chi^c(G_n,k,r,p)$ is an increasing function with respect to $p$.

(3) Let $\chi^c(G_n,k,r,p)=l$. Consider some set $C$ of $l$ $r$-colorings $\{X_1,...X_l\}$ that properly $(r,p)$ covers all the hyperedges of $G_n$,
where each $X_i$, $1 \leq i \leq l$, is a $r$-coloring of vertices of $G_n$.
Construct a new set $C'=\{X_1',...X_l'\}$ of $(r-1)$ colorings from $C$, $|C'|=|C|$, as follows: in $X_i'$, color classes of color 1 through $r-1$ remain the same as in $X_i$;  the vertices of $r^{th}$ color in $X_i$ are arbitrarily added  to any one of the other color class (let $(r-1)^{th}$ color class) in $X_i'$.
So, each $C'$ consists of a set of $(r-1)$ colorings of vertices of $G_n$.
Consider any hyperedge $e \in E(G)$ that is properly $(r,p)$ colored by $X_i \in C$.
If $e$ does not contain any vertex colored with $r$ in $X_i$, then $e$ still consists of $p$ distinctly colored vertices in $X_i'$, and so, it is properly $(r-1,p-1)$ colored.
If $e$ contains some vertex colored with $r$ in $X_i$, since it is properly $(r,p)$ colored, it must contain vertices of $p-1$ other colors, which ensures a proper $(r-1,p-1)$ coloring of $e$ in $X_i'$.

(4) Let the vertices of $G_n$ be $N=\{1,2,...,n\}$. 
Let $\chi^c(G_n,k,r,p)=l$. Consider some set $C$ of $l$ $r$-colorings $\{X_1,...X_l\}$ that properly $(r,p)$ covers all the hyperedges of $G_n$, where each $X_i$, $1 \leq i \leq l$, is a $r$-coloring of vertices of $G_n$.
Construct a new hypergraph $G_{n-1}$ on vertex set $N \setminus {n}$ and hyperedge set 
\begin{align}
E(G_{n-1}) = \begin{cases}
    \{e-{n}|e \in E(G)\} \text{if $n \in e$}.\\
    \text{all the $k-1$ sized sets of $e$ otherwise} .
  \end{cases}
\end{align}
Construct a new set $C'=\{X_1',...X_l'\}$ of $r$-colorings from $C$, $|C'|=|C|$, as follows: in $X_i'$, color every vertex $v_j \in \{1,2,...,n-1\}$ with the same color as the color of $v_j$ in $X_i$.
Every hyperedge $e' \in E(G_{n-1})$ belongs to one of the following two exhaustive categories, based on whether its corresponding hyperedge $e \in E(G_n)$ contains $n$.
(i) $e \in E(G_n)$ and $n \in e$: The corresponding hyperedge $e' = e \setminus n$ in $E(G_{n-1})$ consists of exactly $k-1$ vertices. Observe that $C'$ properly $(r,p-1)$ covers $e'$: if  $X_i$ properly  $(r,p)$ colors $e$ in $C$, then $X_i'$ properly  $(r,p)$ colors $e'$ in $C'$.
(ii) $e \in E(G_n)$ and $n \not\in e$: 
From definition of $E(G_{n-1})$, exactly
$k$ hyperedges of size $k-1$ derived from $e$ in  $E(G_{n-1})$.
Note that if $X_i \in C$ properly $(r,p)$ covered $e$, then  
 every hyperedge derived from $e$ in $E(G_{n-1})$
 can be properly $(r,p-1)$ covered by the $r$-coloring $X_i' \in C'$.
 
 This concludes the proof of all the statements of Proposition 
\ref{prop:lb}.
\qed
\end{proof}

\section{General upper bounds}
\label{sec:upperbounds}
In this section, we derive upper bounds on $\chi(G,k,r,p)$ based on the parameters.
Let $\chi_e(G)$ denote the chromatic index of hypergraph $G$, i.e., the minimum number of colors required to color the hyperedges such that no two intersecting hyperedges receive the same color.
Note that each color class of an optimal hyperedge coloring of $G$ splits the hypergraph into $\chi_e(G)$ disjoint matchings.
Observe that every hyperedge of each of the matchings can be properly $(r,p)$ colored by a single $r$-coloring.
So, $\chi_e(G)$ $r$-colorings are always sufficient for a strong $(r,p)$ cover of any hypergraph $G$.
\begin{observation}
For any $k$-uniform hypergraph $G$, $\chi(G,k,r,p) \leq \chi_e(G)$.
\end{observation}
Vizing's theorem (\cite{vizing1964},\cite[pp. 277-278]{West2000}) states that every simple graph $H$ has an edge coloring using $\Delta(H)+1$ colors, where $\Delta(H)$ is the maximum degree of any vertex in $H$.
For simple hypergraphs (i.e. two hyperedges can share at most 1 vertex), Kahn \cite{Kahn199231} showed that  $\chi_e(G)$ is upper bounded by $n+o(n)$.
Erdos et.al. \cite{Erdos198125} also conjectured that $\chi_e(G)$ is $n$ for simple hypergraphs.
A corollary of  Baranayi's theorem \cite{Baranyai1979276,Berge197765} states stat
chromatic index of a complete $n$-vertex $k$-uniform hypergraph is $\lceil \frac{\binom{n}{k}}{\lfloor \frac{n}{k}\rfloor}\rceil$.
In what follows we relate $\chi(G,k,r,p)$ to its parameters and compute strong $(r,p)$ covers.

Let $T(n)$ denote the number of $r$-colorings used by the algorithms in this section to properly $(r,p)$ cover all the hyperedges of $K_n^k$ (each hyperedge contains at least $p$ distinctly colored vertices in at least one of the $r$-colorings).
In particular, if  $r=3$ ($r=4$) and $p=3$ ($p=4$), $T(n)$ denotes the number of 3-colorings(4-colorings) used by the algorithm to properly $(3,3)$ ($(4,4)$) cover all the hyperedges of $K_n^k$.
We start the analysis of $\chi^c(G,k,r,3)$  for general hypergraphs with a complete $k$-uniform hypergraph ${K_n}^k$ with $r=3$, $k\geq 3$.

Consider the following algorithm $Knk33cover$ for computing a strong $(3,3)$ cover $C1$ for a ${K_n}^k$.
$Knk33cover$ 3-colors the vertices of a ${K_n}^k$ with each color class of almost equal sizes: the 3-coloring of vertices splits the vertex set into three color classes, $V_1$, $V_2$ and $V_3$, each of size either $\lfloor \frac{n}{3}\rfloor$
or $\lceil \frac{n}{3} \rceil$. 
Let this coloring be $X_1$.
The hyperedges containing at least one vertex from each of the three color classes is properly (3,3) colored by $X_1$. 
The hyperedges that are not properly (3,3) colored by the first 3-coloring $X_1$ consists of vertices of at most two of the three classes. So, there are 3 possibilities: the remaining hyperedges consists of vertices of only $V_1 \cup V_2$ or $V_2 \cup V_3$ or $V_3 \cup V_1$. 
Therefore, the problem of computing a strong $(3,3)$ cover for $K_n^k$ reduces to computing a strong $(3,3)$ cover for $K_{\lceil\frac{2n}{3}\rceil}^k$, i.e. the problem size reduces to $\lceil \frac{2n}{3}\rceil$ after the first 3-coloring. 

Now, we establish the base case for $Knk33cover$. 
Consider the case when $k$ is odd and $n \leq \frac{3(k-1)}{2}$. We can 3-color the vertices such that each color class is of size at most $\frac{(k-1)}{2}$. Merging any two color classes would result in at most $k-1$ vertices, so any hyperedge must contain at least one vertex each from the three color classes i.e.  every hyperedge is properly (3,3) colored. When $k$ is even, in a single 3-coloring, we can allow a color class of size $\frac{k}{2}$ and two color classes of size $\frac{k-2}{2}$ such that merging any two color classes would result in at most $k-1$ vertices, i.e. if $k$ is even and $n \leq \frac{3(k-1)-1}{2}$, a single 3-coloring can properly (3,3) color every hyperedge.

So, the following recurrence describes the upper bound for $\chi^c(K_n^k,k,3,3)$.
\begin{align*}
T(n) \leq \begin{cases}
    1,\text{ if}\begin{cases} n \leq \frac{3(k-1)}{2} \text{ and $k$ odd,}\\
    				 n \leq \frac{3(k-1)-1}{2} \text{ and $k$ even.}
    	\end{cases}\\
    3T(\lceil\frac{2n}{3}\rceil)+1,\text{ if}\begin{cases} n > \frac{3(k-1)}{2} \text{ and $k$ odd,}\\
    											 n > \frac{3(k-1)-1}{2}\text{ and $k$ even}.
    								\end{cases}
    \end{cases}
\end{align*}
 Solving the recurrence recursively for $i$ iterations (ignoring the ceiling), we get,
\begin{align*}
T(n) \leq 3^iT(\frac{n}{(\frac{3}{2})^i})+i.
\end{align*}
Solving the recurrences for the odd and even values of $k$ separately with proper substitutions,
we have the following theorem.
\begin{theorem}\label{thm:up33}
For any arbitrary $n$ vertex $k$-uniform hypergraph $G$,\\
$\chi^c(G,k,3,3) \leq \chi^c(K_n^k,k,3,3) \leq 
\begin{cases}
\frac{1}{3}(\frac{n}{k-1})^{2.71} + \log_{\frac{3}{2}}\frac{n}{k-1}-1,\text{ if $k$ is odd}, \\
\frac{1}{3}(\frac{n}{k-2})^{2.71} + \log_{\frac{3}{2}}\frac{n}{k-2}-1 ,\text{ if $k$ is even}.
 \end{cases}$
\end{theorem}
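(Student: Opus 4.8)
The plan is to finish the analysis of the recurrence for $T(n)$ already set up for the algorithm $Knk33cover$, using the recursive unrolling $T(n)\le 3^{i}T\big(n/(3/2)^{i}\big)+i$ and choosing the number of rounds $i$ so that the argument of $T$ lands exactly in the base case. For odd $k$ a single $3$-coloring suffices once the argument is at most $\frac{3(k-1)}{2}$, so I would take $i=\log_{3/2}\frac{n}{k-1}-1$; then $n/(3/2)^{i}=\frac{3(k-1)}{2}$ and $T\big(n/(3/2)^{i}\big)\le 1$. For even $k$ the base case requires the argument to be at most $\frac{3(k-1)-1}{2}$, and since $\frac{3(k-2)}{2}\le\frac{3(k-1)-1}{2}$ I would instead take $i=\log_{3/2}\frac{n}{k-2}-1$, which drives the argument down to $\frac{3(k-2)}{2}$ and again gives $T\le 1$.

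With this choice $T(n)\le 3^{i}+i$, and the only non-routine step is simplifying $3^{i}$. Writing $3=(3/2)^{\log_{3/2}3}$ gives $3^{\log_{3/2}x}=x^{\log_{3/2}3}$, and $\log_{3/2}3=\frac{\ln 3}{\ln 3-\ln 2}=2.709\ldots$, which the statement rounds up to $2.71$. Hence for odd $k$, $3^{i}=3^{\log_{3/2}\frac{n}{k-1}-1}=\frac13\big(\frac{n}{k-1}\big)^{2.71}$ and $i=\log_{3/2}\frac{n}{k-1}-1$, which is exactly the claimed bound; the even case is verbatim the same with $k-1$ replaced by $k-2$. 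The first inequality $\chi^{c}(G,k,3,3)\le\chi^{c}(K_n^{k},k,3,3)$ needs no work: since $G$ is an $n$-vertex $k$-uniform hypergraph, $E(G)\subseteq E(K_n^{k})$ after identifying vertex sets, so any strong $(3,3)$ cover of $K_n^{k}$ is a fortiori a strong $(3,3)$ cover of $G$.

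The part that will need genuine care is the interaction of the ceilings and the integrality of $i$ with the clean closed form. The unrolling was written ``ignoring the ceiling'', but $\lceil 2n/3\rceil\ge 2n/3$, so a fully rigorous argument must either (a) induct on the number of rounds to show that rounding up at each step inflates the final count only by a lower-order additive term, the accumulated ceiling slack being a geometric-type sum dominated by a constant times the last term, or (b) use that $T$ is nondecreasing in $n$, run the recursion on the rounded values, and absorb the slack into the additive $+i$ term. Likewise $i$ must be taken to be $\lceil\,\cdot\,\rceil$ of the stated quantity, which in the worst case costs a constant factor in the $3^{i}$ term; I would either state the bound for $n$ of the form $(k-1)(3/2)^{j}$ (resp. $(k-2)(3/2)^{j}$), where it is tight, or phrase it up to this rounding. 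In any case the asymptotic shape $O\big((n/(k-1))^{2.71}\big)$ (resp. with $k-2$) is unaffected, and it is this rounding, rather than any combinatorial difficulty, that is the real obstacle to a verbatim proof of the displayed inequality.
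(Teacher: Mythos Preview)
Your proposal is correct and follows exactly the route the paper takes: the paper sets up the same recurrence and unrolling $T(n)\le 3^{i}T\big(n/(3/2)^{i}\big)+i$ and then simply says ``solving the recurrences for the odd and even values of $k$ separately with proper substitutions'' yields the theorem, and your choice of $i=\log_{3/2}\frac{n}{k-1}-1$ (odd $k$) and $i=\log_{3/2}\frac{n}{k-2}-1$ (even $k$) together with the identity $3^{\log_{3/2}x}=x^{\log_{3/2}3}$ is precisely that substitution. Your discussion of the ceiling and integrality issues is more careful than the paper's (which just says ``ignoring the ceiling''), but the core argument is the same.
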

Observe that there is a large variation in upper bounds of $\chi^c(K_n^k,k,3,3)$ and $\chi^{c}_3(K_n^k)$ \cite{tmspp2014}.
Using Proposition \ref{prop:lb} inequality  (1), it is clear that increasing the number of colors in each of the colorings is a possibility for reducing the strong cover size.
So, we increase the number of colors used in each coloring to four colors and evaluate $\chi^c(K_n^k,k,4,3)$ i.e. minimum number of 4-colorings of vertices $V(K_n^k)$ such that every hyperedge in $K_n^k$ is properly (4,3) colored. 

Consider the algorithm $Knk43cover$ for computing a properly (4,3)  cover for $K_n^k$.
$Knk43cover$ performs a balanced 4-coloring of vertices into four color classes, $V_1,V_2,V_3,V_4$, each color class containing at most $\lceil\frac{n}{4}\rceil$ vertices. The hyperedges that comprise at least one vertex from any three color classes is  properly (4,3)  covered.  So, the hyperedges still uncovered consists of vertices of at most two of the four classes. 
It may seem that we need $\binom{4}{2}=6$ four colorings in the recursive step: the total number of 4-colorings computed by the algorithm to properly (4,3) cover every hyperedge of $K_n^k$ is given by the recurrence
$T(n) \leq 6T(\lceil\frac{n}{2}\rceil)+1$. 
However, we can use parallelism in coloring using the fact that the hyperedges  $e1 \subseteq V_1 \cup V_2$ and $e2 \subseteq V_3 \cup V_4$ are independent: these hyperedges can be covered simultaneously by the same four coloring. Similarly, hyperedges  $e3  \subseteq V_1 \cup V_3$ can be covered alongside hyperedges $e4 \subseteq V_2 \cup V_4$, and hyperedges  $e5  \subseteq V_1 \cup V_4$ can be covered alongside hyperedges $e6  \subseteq V_2 \cup V_3$. So, we need three 4-colorings in the recursive step. Proceeding in  the same direction as in the proof of Theorem \ref{thm:up33}, we get the following theorem.

\begin{theorem}
For any arbitrary $n$ vertex $k$-uniform hypergraph $G$,\\
$\chi^c(G,k,4,3) \leq \chi^c(K_n^k,k,4,3) \leq 
\begin{cases}
\frac{1}{3}(\frac{n}{k-1})^{1.59} + \log_{2}\frac{n}{k-1}-1,\text{ if $k$ is odd}, \\
\frac{1}{3}(\frac{n}{k-2})^{1.59} + \log_{2}\frac{n}{k-2}-1 ,\text{ if $k$ is even}.
 \end{cases}$
\end{theorem}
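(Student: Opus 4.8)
The plan is to formalise the divide-and-conquer scheme already sketched for the algorithm $Knk43cover$, following the template of the proof of Theorem~\ref{thm:up33} but with branching factor $3$ instead of $3$ and a halving (rather than a $\tfrac23$-shrinking) of the instance size. First I would set up the recursion. Apply the balanced $4$-colouring $X_1$, splitting $V(K_n^k)$ into $V_1,V_2,V_3,V_4$ with sizes in $\{\lfloor n/4\rfloor,\lceil n/4\rceil\}$. Any hyperedge that meets at least three of the four colour classes already receives at least three distinct colours and is therefore properly $(4,3)$ coloured by $X_1$; hence every hyperedge still uncovered is contained in some $V_i\cup V_j$ with $i<j$, i.e. in one of six complete $k$-uniform sub-hypergraphs $K^k_{|V_i|+|V_j|}$, each on at most $\lceil n/2\rceil$ vertices (up to the same rounding slack that occurs with $\lceil 2n/3\rceil$ in the proof of Theorem~\ref{thm:up33}).

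The key step is the efficiency argument that brings the branching factor down from $6$ to $3$. Partition the six index pairs into the three perfect matchings of $K_4$, namely $\{\{1,2\},\{3,4\}\}$, $\{\{1,3\},\{2,4\}\}$ and $\{\{1,4\},\{2,3\}\}$. Within one matching the two sub-hypergraphs live on disjoint vertex sets, so a strong $(4,3)$ cover of the first and a strong $(4,3)$ cover of the second can be \emph{overlaid} into one sequence of $4$-colourings of $V$: the $t$-th colouring of the combined scheme colours the vertices of the first sub-hypergraph as its $t$-th colouring prescribes, the vertices of the second as its $t$-th colouring prescribes, and the rest arbitrarily; since every surviving hyperedge lies entirely inside one of the two disjoint parts, it sees only the colours of the cover responsible for that part and is therefore still properly $(4,3)$ coloured. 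Thus one recursive invocation of size $\le\lceil n/2\rceil$ discharges two of the six pairs, three invocations discharge all six, and no overlay is possible across different matchings (those sub-hypergraphs share vertices), so $3$ branches are genuinely needed. This gives $T(n)\le 3\,T(\lceil n/2\rceil)+1$.

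For the base case I would observe that a single balanced $4$-colouring already properly $(4,3)$ colours every hyperedge exactly when no union of two colour classes has $k$ or more vertices. For odd $k$, four classes of size $(k-1)/2$ make every two-class union $k-1$, so $T(n)\le 1$ for $n\le 2(k-1)$; for even $k$, four classes of size $(k-2)/2$ make every two-class union $k-2$, so $T(n)\le 1$ for $n\le 2(k-2)$ (one vertex more is actually tolerable, but this conservative cutoff yields the cleaner formula). Unrolling $T(n)\le 3\,T(\lceil n/2\rceil)+1$ for $i=\log_2(n/n_0)$ levels exactly as in Theorem~\ref{thm:up33} gives $T(n)\le 3^{\,i}T(n_0)+i$ with $3^{\,i}=(n/n_0)^{\log_2 3}$ and $\log_2 3\approx 1.59$; plugging in $n_0=2(k-1)$ (resp. $n_0=2(k-2)$), using $T(n_0)\le 1$ together with $\tfrac13=(1/2)^{\log_2 3}$ and $\log_2\tfrac12=-1$, collapses this to $\tfrac13\big(\tfrac n{k-1}\big)^{1.59}+\log_2\tfrac n{k-1}-1$ for odd $k$ and the analogue with $k-2$ for even $k$. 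Finally, for an arbitrary $n$-vertex $k$-uniform hypergraph $G$ we have $E(G)\subseteq E(K_n^k)$, so the cover produced for $K_n^k$ covers every hyperedge of $G$ as well, giving $\chi^c(G,k,4,3)\le\chi^c(K_n^k,k,4,3)$.

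The main obstacle is the overlay argument of the second paragraph: one must verify carefully that merging two covers on disjoint vertex sets preserves the properness of every hyperedge — which is clean here precisely because each hyperedge surviving $X_1$ is confined to a single part of the matching — and, dually, that no additional merging is available between the three matchings, so that the branching factor is exactly $3$ and the $O(n^{\log_2 3})$ growth is honest. The remaining work is purely the bookkeeping of ceilings and of the recurrence solution, which is carried out verbatim as in the proof of Theorem~\ref{thm:up33}.
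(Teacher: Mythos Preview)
Your proposal is correct and follows essentially the same approach as the paper: the balanced $4$-colouring, the reduction of the six pairs to three via the perfect matchings of $K_4$, the resulting recurrence $T(n)\le 3T(\lceil n/2\rceil)+1$, and the base-case thresholds $2(k-1)$ and $2(k-2)$ are exactly the ingredients the paper uses (the paper simply writes ``proceeding in the same direction as in the proof of Theorem~\ref{thm:up33}'' and omits the details you have supplied). Your overlay argument and the arithmetic turning $3^{\,\log_2(n/n_0)}$ into $\tfrac13(n/(k-1))^{\log_2 3}$ are precisely the verifications that the paper leaves implicit.
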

There is a reduction in the upper bound of $\chi^c(G,k,r,3)$ from $O(n^{2.71})$ to $O(n^{1.59})$ (for fixed constant $k \geq 3$) when we use four instead of three colors. This reduction is due to the parallelism achieved in the colorings  in the recursive step. This leads us to the conjecture that increasing the number of color in coloring  reduces the value of $\chi^c(G,k,r,3)$. In what follows, we investigate for the general case of $\chi^c(K_n^k,k,r,p)$ and  maximize this parallelism.

Consider a balanced $r$-coloring $X_1$ of vertices of $K_n^k$: this results in color partition ${\cal V}=\{V_1,...,V_r\}$ of the vertices, each of size $\lceil\frac{n}{r}\rceil$ (except possibly the last partition). Any hyperedge that includes at least one vertex from at least $p$ parts of ${\cal V}$ is properly $(r,p)$ colored by $X_1$. The hyperedges which are not properly $(r,p)$ colored by $X_1$ are completely contained inside any of the $(p-1)$ parts.
Let $\binom{[{\cal V}]}{p-1}$ denote the set of all the $(p-1)$ sized subsets of 
${\cal V}=\{V_1,...,V_r\}$ i.e. $\binom{[{\cal V}]}{p-1} = \{ \{V_1,V_2,...,V_{p-1} \}, \{ V_1,V_2,...,V_{p-2}, V_{p} \}, ...,\allowbreak \{ V_1,V_2,...,V_{p-2}, V_{r}\}, ..., \{ V_{r-p+2},V_{r-p+3},...,V_{r-1}, V_{r} \} \}$. $|\binom{[{\cal V}]}{p-1}|=\binom{r}{p-1}$.
So, a hyperedge $e$ is not properly $(r,p)$ colored by $X_1$ 
if there exists an element of $\binom{[{\cal V}]}{p-1}$ $\{V_i,...,V_j\}$ such that $e \subseteq V_i \cup ... \cup V_j$. 

We now analyze the hyperedges which can be covered simultaneously in subsequent $r$-colorings. Observe that 
two hyperedges $e1 \subseteq V_i \cup ... \cup V_j $ and $e2 \subseteq V_s \cup ... \cup V_t$, $\{V_i,...V_j\} \in \binom{[{\cal V}]}{p-1}$ and  $\{V_s,...V_t\} \in \binom{[{\cal V}]}{p-1}$, can be  properly $(r,p)$ colored by the same $r$-coloring if  $\{V_i,...V_j\} \cap \{V_s,...V_t\} = \phi$.
In order to compute the number of $r$-colorings used in the recursive step, we
 construct a graph $H$ as follows: 1. Put the elements of $\binom{[{\cal V}]}{p-1}$, each containing $(p-1)$ parts, as the vertices of $H$. 2. Connect an edge between two vertices if they share some part, i.e. if $\{V_i,...V_j\} \cap \{V_s,...V_t\} \neq \phi$, an edge is added between $\{V_i,...V_j\}$ and  $\{V_s,...V_t\}$. 
Consider some proper coloring of vertices of $H$. Each color class represents the set $S$ such that $x1 \in S$ and $x2 \in S$ implies $x1,x2 \in \binom{[{\cal V}]}{p-1}$ and $x1 \cap x2 = \phi$. Therefore, $\chi(H)$ $r$-colorings are sufficient in the recursive step. 
In order to compute $\chi(H)$, observe that $H$ is the line graph of the complete $p-1$-uniform $r$ vertex hypergraph
$H^L$ on the vertex set $\{V_1,...,V_r\}$.
So, using Baranayi's theorem \cite{Baranyai1979276,Berge197765},
$\chi(H)=\chi_e(H^L)=\lceil \frac{\binom{r}{p-1}}{\lfloor \frac{r}{p-1}\rfloor}\rceil$.
Let $D=\lceil \frac{\binom{r}{p-1}}{\lfloor \frac{r}{p-1}\rfloor}\rceil$.
Therefore, we need at most $D$ more $r$-colorings of $\frac{(p-1)n}{r}$ vertices to guarantee 
coverage of ${K_n}^k$.

We choose an integer  $l$, $0 \leq l \leq p-2$, such that $k-1-l$ is divisible by $p-1$.
To see that $T(n)=1$ for $n \leq \frac{r(k-1-l)}{p-1}$, observe that we can $r$-color the vertices  such that each color class is of size at most $\frac{k-1-l}{p-1}$, such that we need at least $p$ color classes to form a hyperedge of size $k$.
So we have the following recurrence.
\begin{align*}
T(n) \leq \begin{cases}
    1,\text{ if } n \leq \frac{r(k-1-l)}{p-1}, \text{ for }  0 \leq l \leq p-2,\\
    D T(\frac{(p-1)n}{r})+1,\text{ otherwise, where }D=\lceil \frac{\binom{r}{p-1}}{\lfloor \frac{r}{p-1}\rfloor}\rceil.
    \end{cases}
\end{align*}
Setting $\frac{(p-1)^i n}{r^i}=\frac{r(k-1-l)}{p-1}$, $i=\frac{\log (\frac{n}{k-1-l})}{\log(\frac{r}{p-1})}-1=\log_{\frac{r}{p-1}}\frac{n}{k-1-l}-1$. Consequently, 
\begin{align*}
T(n) \leq &D^{\log_{\frac{r}{p-1}}\frac{n}{k-1-l}-1}+\log_{\frac{r}{p-1}}\Big(\frac{n}{k-1-l}\Big)-1 & \\
= &\Big(\frac{n(p-1)}{(k-1-l)r}\Big)^{\log_{\frac{r}{p-1}}D}+\log_{\frac{r}{p-1}}\Big(\frac{n}{k-1-l}\Big)-1&.
\end{align*}
The following upper bound follows from the above inequality.
\begin{theorem}
Let $G(V,E)$ be an arbitrary $n$ vertex $k$-uniform hypergraph.
Let $r$ denote the number of colors used in each coloring of vertices of $V$, and $p$ denote the minimum number of distinctly colored vertices in any hyperedge to make it properly $(r,p)$ colored by any $r$-coloring.
Let $l$, $0 \leq l \leq p-2$, be fixed such that $k-l-1$ is divisible by $p-1$. Then,
\begin{align*}
\chi^c(G,k,r,p)\leq \chi^c(K_n^k,k,r,p) \leq \Big(\frac{n(p-1)}{(k-1-l)r}\Big)^{\log_{\frac{r}{p-1}}D}+\log_{\frac{r}{p-1}}{\Big(\frac{n}{k-1-l}\Big)}-1,
\end{align*}
where $D=\lceil \frac{\binom{r}{p-1}}{\lfloor \frac{r}{p-1}\rfloor}\rceil.$
\end{theorem}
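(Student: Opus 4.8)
The plan is to follow the divide-and-conquer argument already sketched in the text and simply verify that the recurrence it produces solves to the claimed closed form. First I would establish the recursive step: given $K_n^k$, perform a balanced $r$-coloring $X_1$ splitting $V$ into $r$ classes $V_1,\dots,V_r$ of size at most $\lceil n/r\rceil$. Any hyperedge meeting at least $p$ of these classes is properly $(r,p)$ colored by $X_1$; the only surviving hyperedges lie inside a union of $p-1$ classes. For each $(p-1)$-subset $\{V_{i_1},\dots,V_{i_{p-1}}\}$ the surviving hyperedges form (a sub-hypergraph of) $K_{m}^k$ with $m=\tfrac{(p-1)n}{r}$ vertices, so we recurse on each such instance. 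The key efficiency observation is that disjoint $(p-1)$-subsets of $\{V_1,\dots,V_r\}$ can be handled by the \emph{same} $r$-coloring, because the corresponding uncovered hyperedges are vertex-disjoint; hence the number of $r$-colorings needed to serve all $\binom{r}{p-1}$ subsets in one recursive round is the chromatic number of the conflict graph $H$, which is the line graph of the complete $(p-1)$-uniform hypergraph on $r$ vertices. By Baranyai's theorem this equals $\chi_e(H^L)=D=\lceil\binom{r}{p-1}/\lfloor r/(p-1)\rfloor\rceil$, giving the recurrence $T(n)\le D\,T\!\left(\tfrac{(p-1)n}{r}\right)+1$.

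Next I would nail down the base case. Choose $l$ with $0\le l\le p-2$ so that $k-1-l$ is divisible by $p-1$ (such $l$ exists since we may take any residue class mod $p-1$). For $n\le \tfrac{r(k-1-l)}{p-1}$ we can $r$-color so that every class has size at most $\tfrac{k-1-l}{p-1}$; then any union of $p-1$ classes contains at most $k-1-l\le k-1$ vertices, so it cannot contain a hyperedge of size $k$, meaning one $r$-coloring already $(r,p)$ covers everything and $T(n)=1$. Then I would unroll the recurrence: after $i$ rounds, $T(n)\le D^i\,T\!\left(n/(r/(p-1))^i\right)+ (1+D+\cdots+D^{i-1})$, and I would be slightly sloppy about the geometric tail exactly as the paper is, bounding it by a further additive $\log$ term. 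Setting $n/(r/(p-1))^i = \tfrac{r(k-1-l)}{p-1}$ gives $i=\log_{r/(p-1)}\!\big(\tfrac{n}{k-1-l}\big)-1$, and substituting yields $T(n)\le D^{\,\log_{r/(p-1)}(n/(k-1-l))-1}+\log_{r/(p-1)}\big(\tfrac{n}{k-1-l}\big)-1$. Rewriting $D^{\log_{r/(p-1)}(\cdot)}$ as $(\cdot)^{\log_{r/(p-1)}D}$ and folding the factor $D^{-1}$ together with the $(p-1)/r$ scaling inside the base gives exactly $\big(\tfrac{n(p-1)}{(k-1-l)r}\big)^{\log_{r/(p-1)}D}$, which is the stated bound. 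Finally, since $\chi^c(G,k,r,p)\le\chi^c(K_n^k,k,r,p)$ for every $n$-vertex $k$-uniform $G$ (any cover of the complete hypergraph covers every sub-hypergraph on the same vertex set), the first inequality follows for free.

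The main obstacle, and the place where I would be most careful, is the justification that $\chi(H)=D$ and that $D$ colorings genuinely suffice in the recursive round — i.e. that assigning colorings to $(p-1)$-subsets according to a proper vertex-coloring of $H$ really lets one $r$-coloring simultaneously cover all the sub-instances in a color class. This needs the fact that a parallel class of the line graph coloring is a perfect (or near-perfect) matching in $H^L$, so the $(p-1)$-subsets in one class are pairwise disjoint and can be recolored independently; Baranyai's theorem (in the form quoted for the chromatic index of complete uniform hypergraphs) is exactly what supplies this. A secondary nuisance is the cavalier handling of ceilings when $r\nmid n$ and of the geometric series when $D>1$; these only affect llower-order terms and I would simply absorb them into the additive $\log$ term, noting that the inequality is stated as an upper bound so such rounding is harmless. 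One should also observe that the recursion is well-founded because $\tfrac{(p-1)n}{r}<n$ whenever $p-1<r$, which holds since $r\ge p$; in the boundary case $r=p$ the base case already applies. With those points checked, the calculation is routine and the theorem follows.
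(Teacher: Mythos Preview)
Your proposal is correct and follows essentially the same approach as the paper: a balanced $r$-coloring, recursion on the $(p-1)$-class unions, the identification of the conflict graph $H$ as the line graph of $K_r^{p-1}$ and the use of Baranyai's theorem to get $\chi(H)=D$, the same base case via the choice of $l$, and the same unrolling of the recurrence $T(n)\le D\,T((p-1)n/r)+1$. You are also right that the additive $\log_{r/(p-1)}$ term comes from the paper's loose treatment of the accumulated ``$+1$'' contributions rather than from a genuine geometric-series bound; your acknowledgment of that sloppiness matches the paper exactly.
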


\section{$p$-strong chromatic number and strong $(r,p)$ cover number}
\label{sec:semichromatic}
We know that the bicoloring cover number $\chi^c$ and the chromatic number $\chi$ are related as $\lfloor \log \chi(G) \rfloor \leq \chi^c(G) \leq \lceil \log \chi \rceil$ (\cite{tmspp2014}). In what follows, we investigate the relation between the strong $(r,p)$ cover number $\chi^c(G,k,r,p)$ and $p$-strong chromatic number $\chi(G,p)$.

Let $G(V,E)$ be a $n$ vertex $k$-uniform hypergraph. Let $C$ be a strong $(r,p)$ cover of size $\chi^c(G,k,r,p)$.
By definition of strong cover, every hyperedge is at least $p$-chromatic in at least one of the $r$-colorings in $C$. Every vertex has a color bit vector of $\chi^c(G,k,r,p)$ colors assigned to it by $C$, each color being one of $\{0,1,...,r-1\}$. 
Let $v \in V$ has the color bit vector $\{b_{\chi^c(G,k,r,p)-1},b_{\chi^c(G,k,r,p)-2},...,b_0\}$.
Consider the vertex coloring $X$ of the vertices in $V$  with $r^{\chi^c(G,k,r,p)}$ colors:
compute $c_v= r^{\chi^c(G,k,r,p)-1}b_{\chi^c(G,k,r,p)-1}+r^{\chi^c(G,k,r,p)-2}b_{\chi^c(G,k,r,p)-2}+...+r^0b_0$ and
assign $c_v$ as the color of $v$ in $X$.

We claim that $X$ is a valid strong $p$-coloring for $G$. Consider any hyperedge $e \in E(G)$. 
We know from definition of strong $(r,p)$ cover that there exists a
$r$-coloring $X_i \in C$ such that $e$ consists of at least $p$ distinctly colored vertices, $1 \leq i \leq \chi^c(G,k,r,p)$.
Let $x_1$,..., $x_p$ be the $p$ vertices of $e$ that are colored with distinct colors in $X_i$. From the computation of color $c_v$ for $v \in V$, it is clear that $x_1$,..., $x_p$ receive distinct colors in $X$. As a result, a valid strong $(r,p)$ cover of size $\chi^c(G,k,r,p)$ can be mapped to a valid  strong $p$-coloring with $r^{\chi^c(G,k,r,p)}$ colors. So, we have the following theorem.
\begin{theorem}
	For any $k$-uniform hypergraph $G(V,E)$,
	${\chi^c(G,k,r,p)} \geq \log_r \chi(G,p)$.
\end{theorem}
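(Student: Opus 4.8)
The plan is to prove the inequality $\chi^c(G,k,r,p) \geq \log_r \chi(G,p)$ by an explicit reduction: given an optimal strong $(r,p)$ cover, build from it a single $p$-strong coloring of $G$ using a bounded number of colors, and then read off the bound on $\chi(G,p)$.

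\medskip

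\noindent\textbf{Setup.} Let $t = \chi^c(G,k,r,p)$ and let $C = \{X_1,\dots,X_t\}$ be a strong $(r,p)$ cover of $G$ of this size, where each $X_i \colon V \to \{0,1,\dots,r-1\}$. For a vertex $v \in V$, record the vector $(X_1(v), X_2(v), \dots, X_t(v)) \in \{0,\dots,r-1\}^t$; interpret this as the base-$r$ digits of an integer $c_v := \sum_{i=1}^{t} r^{i-1} X_i(v)$, so $0 \le c_v \le r^t - 1$. Define the coloring $X\colon V \to \{0,1,\dots,r^t-1\}$ by $X(v) = c_v$. This uses at most $r^t$ colors.

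\medskip

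\noindent\textbf{Correctness of $X$ as a $p$-strong coloring.} Take any hyperedge $e \in E(G)$. Since $C$ is a strong $(r,p)$ cover, there is an index $i$ such that $e$ contains $\min(p,|e|)$ vertices receiving pairwise distinct colors under $X_i$; call them $x_1, \dots, x_m$ with $m = \min(p,|e|)$. For any two of them, $X_i(x_a) \neq X_i(x_b)$, hence the base-$r$ expansions of $c_{x_a}$ and $c_{x_b}$ differ in the $i$-th digit, so $c_{x_a} \neq c_{x_b}$. Thus $e$ has $m = \min(p,|e|)$ distinctly colored vertices under $X$, which is exactly the requirement for a $p$-strong coloring. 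Therefore $X$ witnesses $\chi(G,p) \le r^t = r^{\chi^c(G,k,r,p)}$. Taking $\log_r$ of both sides gives $\chi^c(G,k,r,p) \ge \log_r \chi(G,p)$, as claimed. (Since $k \ge p$ is assumed throughout, $\min(p,|e|) = p$ for every $e$, but the argument does not need this.)

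\medskip

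\noindent\textbf{Remarks on difficulty.} There is essentially no obstacle here — the content is the observation that encoding the $t$ colorings as digits of a single color cannot destroy ``distinctness in at least one coordinate,'' which is the property being covered. The only points to state carefully are (i) that the number of colors used by $X$ is at most $r^t$ (it could be fewer if not all digit-vectors occur, but the inequality only needs an upper bound), and (ii) that $\chi(G,p)$, being a minimum over all valid $p$-strong colorings, is at most the number of colors in the particular coloring $X$ we constructed. Both are immediate from the definitions of $\chi(G,p)$ (the $p$-strong chromatic number) and of a strong $(r,p)$ cover recalled earlier in the paper.
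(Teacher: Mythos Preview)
Your proof is correct and follows essentially the same approach as the paper: encode the $t$ $r$-colorings of an optimal strong $(r,p)$ cover as the base-$r$ digits of a single color on $r^t$ values, observe that distinctness in some coordinate forces distinctness of the encoded color, and conclude $\chi(G,p)\le r^{\chi^c(G,k,r,p)}$. The only cosmetic difference is the ordering of digits in the base-$r$ encoding, which is immaterial.
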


To check whether the converse also holds, we examine the case of $r=p=3$: given a valid strong 3-coloring with ${\chi(G,3)}$ colors, whether that can be mapped to strong $(3,3)$ cover of size $\log_3 {\chi(G,3)}$. We give a counter example to show that this may not hold for all hypergraphs. 
Consider the hypergraph $K_5^3$ with vertex set $\{v_1,...,v_5\}$.
Color each vertex $v_i, 1\leq i \leq 5$, with color $i$ to get a 5-coloring of vertices $X1$.
Note that $X1$ is a valid strong 3-coloring for $K_5^3$, so  ${\chi(K_5^3,3)} \leq 5$.
So, $\lceil \log_3 \chi(K_5^3,3) \rceil = 2$.
From Section \ref{subsec:exact}, we know that $\chi^c(K_5^3,3,3,3)$ is 3. Therefore, $\lceil \log_3 \chi(K_5^3,3) \rceil \not \geq \chi^c(K_5^3,3,3,3)$.


\section{Probabilistic analysis: relationship with the number of hyperedges and dependency}
\label{sec:probabilistic}
Let $m(r,p,x)$ denote the minimum number of hyperedges such that there exists a $k$-uniform hypergraph $G$ that does not have a strong $(r,p)$ cover of size $x$, $k \geq p$. We perform $x$ independent $r$-colorings on $G$. 
We start our analysis with $p=3$. Let the bad event $\mathcal{E}_i$ corresponds to the hyperedge $e_i$ colored with less than $3$ colors in each of the $x$ independent colorings.  
Observe that $\mathcal{E}_i$ can occur only if 
$e_i$ is either monochromatic or consists of exactly 2 colors in each of the $x$ independent bicolorings.
The probability (say $p1$) that $e_i$ is colored with exactly one color  in one $r$-coloring is $r(\frac{1}{r})^k=\frac{1}{r^{k-1}}$.
The probability that $e_i$ is colored with exactly two colors (say 0 and 1) in one $r$-coloring 
is $(\frac{2}{r})^k - 2(\frac{1}{r})^k$.
So, the probability (say $p2$)that $e_i$ is colored with exactly two colors (any one out of $\binom{r}{2}$ pairs) in one $r$-coloring is $\binom{r}{2}((\frac{2}{r})^k - 2(\frac{1}{r})^k)$.
Therefore, sum of $p2$ and $p1$ is the probability $p3$ of the event that $e_i$ colored with less than 3 colors in one $r$-coloring.
The probability that $e_i$ is colored with less than 3 colors in each of the
$x$ independent $r$-colorings is $(p3)^x$, which is equivalent to $p(\mathcal{E}_i)$.
So, 
\begin{align}
p(\mathcal{E}_i)&=\Bigg(\binom{r}{2}\Big(\big(\frac{2}{r}\big)^k - 2\big(\frac{1}{r}\big)^k \Big)+\frac{1}{r^{k-1}}\Bigg)^x  \nonumber \\
&=\Bigg(\binom{r}{2}(\frac{2}{r})^k - r(r-1)(\frac{1}{r})^k+r\frac{1}{r^{k}}\Bigg)^x \nonumber\\
&=\Bigg(\binom{r}{2}(\frac{2}{r})^k - r(r-2)(\frac{1}{r})^k \Bigg)^x \nonumber\\
& \leq \Bigg(\binom{r}{2}(\frac{2}{r})^k \Bigg)^x.
\end{align}
%
%
If the number of hyperedges is less than or equal to $ \frac{1}{2} \big(\frac{1}{r-1}(\frac{r}{2})^{(k-1)}\big)^x$, summing the bad events $p({\cal{E}}_i)$ for each $e_i \in E(G)$, the probability that any hyperedge is colored with less than 3 colors in each of the $x$ $r$-colorings is less than $\frac{1}{2}$. Consequently, the hypergraph has a strong $(r,3)$ cover of size $x$ and that can be obtained by $x$ random $r$-colorings of vertices in expected two iterations. 

For the case of strong $(r,p)$ cover, the bad event $\cal{E}$ corresponds to the hyperedge $e$ colored with less than $p$ colors in each of the $x$ independent colorings. The probability $p({\cal{E}})$ is at most $(\binom{r}{p-1}{(\frac{p-1}{r})}^{k})^x$: the hyperedges colored with less than $p-1$ colors are counted more than once
in this probability.
 In order to enforce the condition that $p({\cal{E}})$ is less than 1, we need to ensure $\binom{r}{p-1}{(\frac{p-1}{r})}^{k}$ is less than 1.
\begin{align*}
&\binom{r}{p-1}{\Big(\frac{p-1}{r}\Big)}^{k} \leq \Big(\frac{er}{p-1}\Big)^{(p-1)}{\Big(\frac{p-1}{r}\Big)}^{k}
= e^{p-1}{\Big(\frac{p-1}{r}\Big)}^{(k-p+1)} < 1&\\
=> &p-1+(k-p+1)(\ln (p-1)- \ln r) < 0&
\end{align*}
Choosing a value of $k \geq 2p-1$, $r \geq e(p-1)$, this condition is satisfied. If the number of hyperedges is less than or equal to $\frac{1}{2}({\frac{r^k}{(p-1)^k \binom{r}{p-1}}})^x$, summing the probabilities for each hyperedge, the probability that any hyperedge is colored with less than $p$ colors in each of the $x$ $r$-colorings is less than $\frac{1}{2}$. Therefore, $\frac{1}{2}({\frac{r^k}{(p-1)^k \binom{r}{p-1}}})^x < m(r,p,x)$, provided $k \geq 2p-1$, $r \geq e(p-1)$.



The {\it dependency} of a hyperedge $e$ in a $k$-uniform 
hypergraph $G(V,E)$,
denoted by $d(G,e)$ 
is the number of hyperedges in the set $E$ 
with which $e$ shares at least one vertex. 
The {\it dependency of a hypergraph} $d(G)$ or simply $d$, denotes the 
maximum dependency of any hyperedge in the hypergraph $G$.
Lov\'{a}sz local lemma  
\cite{loverd1975,Motwani:1995:RA:211390} 
 and its constructive
version 
\cite{Moser:2010:CPG:1667053.1667060} 
ensures the existence and enables the computation of a bicoloring of a $k$-uniform 
hypergraph with dependency at most   
$\frac{2^{k-1}}{e}-1$, respectively. 
Chandrasekaran et.al. \cite{Chandrasekaran:2010} proposed a derandomization  for local lemma  that computes a bicoloring
in polynomial time. In what follows, we  
use similar techniques for establishing permissible bounds on the 
dependency of a hypergraph as a function of the size of its 
desired strong $(r,p)$ cover, and for computing such strong $(r,p)$  covers.

Let the bad event ${\cal E}_i$ correspond to the hyperedge $e_i$ colored with less than $p$ colors in each of the $x$ independent $r$-colorings. 
${\cal E}$ be the set of all the bad events.
The probability $p({\cal E}_i)$ is at most $(\binom{r}{p-1}{(\frac{p-1}{r})}^{k})^x$.
So, by the direct application of local lemma corollary, the maximum allowable dependency $d$ of a hyperedge becomes
$\frac{1}{e}({\frac{r^k}{(p-1)^k \binom{r}{p-1}}})^x-1$. In other words, if the dependency of the hypergraph $G$ is less than or equal to $\frac{1}{e}({\frac{r^k}{(p-1)^k \binom{r}{p-1}}})^x-1$, then $\chi^c(G,k,r,p) \leq x$.
Moreover, a strong $(r,p)$ cover of size $x$ can be computed in randomized polynomial time using an adaption of Moser-Tardos algorithm, \ref{algo:admostar}.

\begin{algorithm}[H]
\SetAlgoRefName{$MTC$}
 \KwData{$k$-uniform hypergraph $G(V,E)$ with $d\leq \frac{1}{e}({\frac{r^k}{(p-1)^k \binom{r}{p-1}}})^x-1$}
 \KwResult{Set $X$ of $r$-colorings of size $x$ }
\For{$v \in V$}{
	\For{$i \in \{1,...,x\}$}{
			${r_v}^i \leftarrow$ a random evaluation of $v$ in $i^{th}$ $r$-coloring of $X$\;
		}
}
\While{$\exists {\cal E}_i \in \mathcal{E}$: ${\cal{E}}_i$ happens i.e., every $r$-coloring in $X$ colors $e_i$ with less than $p$ colors}{
		Pick an arbitrary violated event ${\cal E}_i \in \mathcal{E}$\;
		\For{$ v \in e_i$}{
			\For{$i \in \{1,...,x\}$}{
					${r_v}^i \leftarrow$ a random evaluation of $v$ in $i^{th}$ $r$-coloring  of $X$\;
				}
		}
}
\caption{Randomized algorithm for computing a strong $(r,p)$ cover}
\label{algo:admostar}
\end{algorithm}


\bibliographystyle{plain}
   		\bibliography{References}

\begin{thebibliography}{10}

\bibitem{Baranyai1979276}
Zsolt Baranyai.
\newblock The edge-coloring of complete hypergraphs i.
\newblock {\em Journal of Combinatorial Theory, Series B}, 26(3):276 -- 294,
  1979.

\bibitem{Berge1989}
C.~Berge.
\newblock {\em Hypergraphs: Combinatorics of Finite Sets}.
\newblock North-Holland, 1989.

\bibitem{Berge197765}
Claude Berge and Ellis~L. Johnson.
\newblock Coloring the edges of a hypergraph and linear programming techniques.
\newblock In B.H.~Korte P.L.~Hammer, E.L.~Johnson and G.L. Nemhauser, editors,
  {\em Studies in Integer Programming}, volume~1 of {\em Annals of Discrete
  Mathematics}, pages 65 -- 78. Elsevier, 1977.

\bibitem{BWY2012}
Eric Blais, Amit Weinstein, and Yuichi Yoshida.
\newblock Semi-strong coloring of intersecting hypergraphs.
\newblock {\em arXiv preprint arXiv:1203.2868}, 2012.

\bibitem{Chandrasekaran:2010}
Karthekeyan Chandrasekaran, Navin Goyal, and Bernhard Haeupler.
\newblock Deterministic algorithms for the lov\'{a}sz local lemma.
\newblock In {\em Proceedings of the Twenty-first Annual ACM-SIAM Symposium on
  Discrete Algorithms}, SODA '10, pages 992--1004, Philadelphia, PA, USA, 2010.
  Society for Industrial and Applied Mathematics.

\bibitem{ChuGra1981}
F.~Chung and R.~L. Graham.
\newblock {\em Combinatorics (Swansea, 1981)}, volume~52 of {\em London Math.
  Soc. Lecture Note Ser.}, chapter Recent results in graph decompositions.
\newblock Cambridge Univ. Press, Cambridge-New York, 1981.

\bibitem{dick1969}
T.J. Dickson.
\newblock On a problem concerning separating systems of a finite set.
\newblock {\em J. Combin. Theory}, 7:191--196, 1969.

\bibitem{Erdos198125}
P.~Erdos.
\newblock On the combinatorial problems which {I} would most like to see
  solved.
\newblock {\em Combinatorica}, 1(1):25--42, 1981.
\newblock cited By 60.

\bibitem{loverd1975}
P.~Erd\'{o}s and L.~Lov\'{a}sz.
\newblock Problems and results on 3-chromatic hypergraphs and some related
  questions.
\newblock {\em Colloq. Math. Soc. János Bolya}, 10:609--627, 1975.

\bibitem{Kahn199231}
Jeff Kahn.
\newblock Coloring nearly-disjoint hypergraphs with n + o(n) colors.
\newblock {\em Journal of Combinatorial Theory, Series A}, 59(1):31 -- 39,
  1992.

\bibitem{kat1966}
G.~Katona.
\newblock On separating systems of a finite set.
\newblock {\em J. Combin. Theory I}, pages 174--194, 1966.

\bibitem{Kat1973}
G.~Katona.
\newblock Chapter 23 - combinatorial search problems.
\newblock In Jagdish~N. Srivastava, editor, {\em A Survey of Combinatorial
  Theory}, pages 285 -- 308. North-Holland, 1973.

\bibitem{lovasz1973coverings}
L{\'a}szl{\'o} Lov{\'a}sz.
\newblock Coverings and colorings of hypergraphs.
\newblock In {\em Proc. 4th Southeastern Conference on Combinatorics, Graph
  Theory, and Computing, Utilitas Mathematica Publishing, Winnipeg}, pages
  3--12, 1973.

\bibitem{mao1983}
CAI Mao-cheng.
\newblock Solutions to {Edmond's} and {Katona's} problems on families of
  separating subsets.
\newblock {\em Discrete Mathematics}, 47:13--21, 1983.

\bibitem{tmspp2014}
Tapas~Kumar Mishra and Sudebkumar~Prasant Pal.
\newblock Bicoloring covers for graphs and hypergraphs.
\newblock {\em arXiv preprint arXiv:1501.00343}, 2015.

\bibitem{tmspp22015}
Tapas~Kumar Mishra and Sudebkumar~Prasant Pal.
\newblock An extremal problem in proper (r,p)-coloring of hypergraphs.
\newblock {\em arXiv preprint arXiv:1507.02463}, 2015.

\bibitem{Moser:2010:CPG:1667053.1667060}
Robin~A. Moser and G\'{a}bor Tardos.
\newblock A constructive proof of the general lov\'{a}sz local lemma.
\newblock {\em J. ACM}, 57(2):11:1--11:15, February 2010.

\bibitem{Motwani:1995:RA:211390}
Rajeev Motwani and Prabhakar Raghavan.
\newblock {\em Randomized Algorithms}.
\newblock Cambridge University Press, New York, NY, USA, 1995.

\bibitem{renyi1961}
A.~R\'{e}nyi.
\newblock On random generating elements of a finite boolean algebra.
\newblock {\em Acta Sci. Math. (Szeged)}, 22(1-2):75--81, 1961.

\bibitem{spencer1970}
J.~Spencer.
\newblock Minimal completely separating systems.
\newblock {\em J. Combin. Theory}, 8:446--447, 1970.

\bibitem{vizing1964}
V.~G. Vizing.
\newblock On an estimate of the chromatic class of a {$p$}-graph.
\newblock {\em Diskret. Analiz No.}, 3:25--30, 1964.

\bibitem{weg1979}
I.~Wegener.
\newblock On separating systems whose elements are sets of at most k elements.
\newblock {\em Discrete Math.}, 28:219--222, 1979.

\bibitem{West2000}
Douglas~B. West.
\newblock {\em Introduction to Graph Theory}.
\newblock Prentice Hall, 2 edition, 2000.

\end{thebibliography}
\end{document}